\documentclass[12pt,twoside]{article}

\usepackage[a4paper]{geometry}
\makeatletter
\renewcommand\title[1]{\gdef\@title{\reset@font\Large\bfseries #1}}
\renewcommand\section{\@startsection {section}{1}{\z@}%
                                   {-3.5ex \@plus -1ex \@minus -.2ex}%
                                   {2.3ex \@plus.2ex}%
                                   {\normalfont\large\bfseries}}
\renewcommand\subsection{\@startsection{subsection}{2}{\z@}%
                                     {-3ex\@plus -1ex \@minus -.2ex}%
                                     {1.5ex \@plus .2ex}%
                                     {\normalfont\normalsize\bfseries}}
\renewcommand\subsubsection{\@startsection{subsubsection}{3}{\z@}%
                                     {-2.5ex\@plus -1ex \@minus -.2ex}%
                                     {1.5ex \@plus .2ex}%
                                     {\normalfont\normalsize\bfseries}}

\usepackage{amsmath,amssymb,amsthm,geometry,mathtools}

\usepackage{bm} 
\usepackage{cite}

\usepackage{enumerate}

\usepackage{authblk}
\usepackage{hyperref}
\hypersetup{
    colorlinks=true,        
    linkcolor=black,        
    citecolor=black,        
    urlcolor=black,         
    pdfborder={0 0 0}        
}
\usepackage{cleveref}  

\usepackage{setspace}
\newtheorem{definition}{$\mathbf{Definition}$}
\newtheorem{theorem}{$\mathbf{Theorem}$}
\newtheorem{lemma}{$\mathbf{Lemma}$}

\newtheorem{example}{Example}

\title{MDS and NMDS Codes from the Extended Twisted Generalized  Reed-Solomon Codes}
\author[1]{Yanli Wang}
\author[1]{Yanxin Chen}
\author[1*]{Tongjiang Yan}
\date{} 
\affil[1]{China University of Petroleum(East China), Qingdao 266580, Shandong, China;
*Corresponding author. E-mail: yantoji@163.com; wyl0207111@163.com;
Contributing authors: 15581953002@163.com.}

\begin{document}

\maketitle
\begin{abstract}
This paper contributes to  maximum distance separable (MDS) and near MDS (NMDS) properties of 
the extended generalized twisted Reed-Solomon (TGRS) codes.
Firstly, a family of extended TGRS (ETGRS)  are  constructed by appending three columns to the generator matrix of original TGRS codes. 
Secondly, the necessary and sufficient conditions for these codes to be MDS or almost MDS (AMDS) codes
are derived. 
Then, by analyzing the AMDS properties of their dual codes,  
the necessary and sufffcient conditions for them to be NMDS codes are established.  
Furthermore, some examples are given to verify the main results. Finally, we determine the non-generalized
Reed-Solomon (non-GRS) characteristics of them via the Schur product method.

\end{abstract}

\noindent $\mathbf{Keywords\; }$  Extended TGRS codes $\cdot$ MDS $\cdot$ NMDS $\cdot$ Non-GRS

\section{Introduction}\par
Throughout this paper, let \( \mathbb{F}_{q} \) be the finite field with \( q \) elements, 
where $q = p^m$ and $p$ is a prime.  Let $\mathbb{F}_q^* = \mathbb{F}_q \setminus \{0\}$.
Let $\mathbb{F}_q^n$ denote the vector space of all $n$-tuples over $\mathbb{F}_q$. 
A linear code $\mathcal{C}$ with parameters $[n,k,d]$ is called a MDS code 
if it reaches the Singleton upper bound, i.e., $d = n - k + 1$. 
If $d = n - k$, then $\mathcal{C}$ is said to be an AMDS code. 
The Euclidean dual code of $\mathcal{C}$ is defined as
\[
\mathcal{C}^{\perp} = \left\{ x \in \mathbb{F}_q^n \mid \langle x, c \rangle = 0 \text{ for all } c \in \mathcal{C} \right\}.
\]
If both $\mathcal{C}$ and $\mathcal{C}^{\perp}$ are AMDS codes, then $\mathcal{C}$ is called a NMDS code.

Reed-Solomon (RS) codes, generalized RS (GRS) codes and extended GRS (EGRS) codes  are classical families of MDS codes. 
Based on GRS codes, in \cite{roth1989construction}, Roth and Lempel constructed a new class of Roth-Lempel codes by adding two columns to its generator matrix, 
then they proved that the code is a non-GRS type MDS code.
In \cite{beelen2017twisted}, inspired by Sheekey's twisted Gabidulin codes, Beelen et al. first introduced TGRS codes, 
but TGRS codes are neither  necessarily MDS codes nor GRS-type.
Affected by their work, many researches have begun to focus on  TGRS
\cite{beelen2017twisted,beelen2018structural,beelen2022twisted,liu2021construction,huang2021mds,sui2022mds,sui2022mds1,sui2023new,gu2023twisted,hu2025p,zhu2022self,zhu20241,guo2023duality,zhang2022class,zhu2021self} 
and ETGRS codes to be MDS, AMDS, NMDS codes or non-GRS type.

In 2024, Zhu et al.\cite{zhu2024+} introduced ETGRS codes by adding a column to the generator matrix of TGRS codes. 
In 2025, Li et al.\cite{li2025covering} further studied two classes of ETGRS codes obtained by adding a column and deleting a row from the generator matrix of GRS codes. 
Both works showed that the  corresponding codes are MDS, AMDS and non-GRS type.

Different from those in \cite{zhu2024+} and \cite{li2025covering}, Zhang et al. \cite{zhang2025almost} constructed another class of ETGRS codes by adding a new column to TGRS generator matrices.
Specifically, this column corresponds to the twist term with the hook at any row.
From this, they also obtained several non-GRS-type self-orthogonal MDS codes.

In \cite{ma2026mds}, 
Ma et al. introduced ETGRS codes by adding two columns to the generator matrix of TGRS codes.
They proved that the MDS codes are inequivalent to GRS codes,
and further showed that the NMDS codes are inequivalent to Roth-Lempel codes.

Zhang et al. \cite{zhang2026non} constructed another class of ETGRS codes.
More precisely, 
they introduced two extra columns corresponding to the twist term 
in the generator matrix of TGRS codes with the hook placed at an arbitrary row,
and proved that this codes are MDS or AMDS as well as non-GRS-type. 
They also determined the covering radii and deep holes of these codes.

Motivated by above works, this paper mainly focuses on a class of  defined in \Cref{def 2} by adding three additional columns to the generator matrix of TGRS codes .
The rest of this paper is organized as follows. 
In \Cref{Preliminaries}, we briefly introduce basic notations related to ETGRS codes, MDS codes, NMDS codes, extended codes and Schur Product.
In \Cref{Extended}, we verify that the ETGRS codes is an extended code of the extended TGRS code.
In \Cref{MDS} and \Cref{AMDS}, we derive the necessary and sufffcient conditions for this codes to be MDS and AMDS, respectively.
In \Cref{C^}, we give the necessary and sufficient conditions for its duals to be AMDS, then prove our codes are NMDS.
In \Cref{non}, we discuss the non-GRS properties by means of the Schur product.
Finally, \Cref{Conclusion} concludes this paper.

\section{Preliminaries}\label{Preliminaries}
 Throughout this paper, we fix the following notations for convenience.
\begin{itemize}
    \item For any finite set \( S \), \( |S| \) denotes its cardinality, i.e., the number of elements contained in \( S \).

    \item Let \(E_i(j)\) denote the matrix obtained by deleting the \(j\)-th column of \(E_i\).
    
    \item Let \(S_p(q)\) denote the matrix obtained by deleting the \(q\)-th row of \(S_p\).
    
    \item Denote $[n] = \{1, 2, \dots, n\}.$
    
    \item  For any vector \( \bm{\alpha} = (\alpha_{1}, \alpha_{2}, \ldots, \alpha_{n}) \in \mathbb{F}_{q}^{n} \), we define \(\bm{\alpha^{i}}\) =\( (\alpha_{1}^{i}, \alpha_{2}^{i}, \ldots, \alpha_{n}^{i})\) with \(i \geq 0\).
   
    \item Let \(u_i = \prod_{\substack{j=1 \\ j \neq i}}^n (\alpha_i - \alpha_j)^{-1} \) with \( i=1,2,\dots,n.\)
    
    \item  Let $\bm{\alpha} = (\alpha_1, \alpha_{2},\dots, \alpha_n) \in \mathbb{F}_q^n$ be an $n$-tuple of distinct elements of $\mathbb{F}_q.$
    Defined the $r$-th  polynomials  as 
    \[\sigma_r(I) = (-1)^r\sum_{i_1 < i_2 < \cdots < i_r \in I} \alpha_{i_1} \alpha_{i_2} \cdots \alpha_{i_r},\quad 1 \leq r \leq |I|\] 
     for any finite set $I \subseteq [n].$ For simplicity of notation, write \(\sigma_r(I)\) as $\sigma_r.$
    Defined the complete symmetric polynomial of degree $r$  as 
    \[S_r(\bm{\alpha}) = 
    \sum_{\substack{r_1+r_2+\dots+r_n=r,  r_i \geq 0}} \alpha_1^{r_1} \alpha_2^{r_2} \dots \alpha_n^{r_n} \] 
    and denote $S_r(\bm{\alpha})$  simply by $S_r$.
    
    \item Let \(\Delta_L^{(2)}=\sigma_1^2 - \sigma_2,|L| = k-2\) and \(\Delta_I^{(3)}= \sigma_1^3 + \sigma_3 - 2\sigma_2\sigma_1,|I| = k.\) 
    
    \item Let \(\Delta_J^{(4)}= \left(
    - \left( \sigma_2 \sigma_1^2 + \sigma_4 - \sigma_2^2 - \sigma_3 \sigma_1 \right)
    + \sigma_1 \left( \sigma_1^3 + \sigma_3 - 2\sigma_2 \sigma_1 \right)
    \right),|J| = k-1.\)

    \item Let  \(\Delta_J^{(5)}=
    - \left( \sigma_3\sigma_1^2 + \sigma_5 - \sigma_3\sigma_2 - \sigma_4\sigma_1 \right) 
    + \sigma_2\left( \sigma_1^3 + \sigma_3 - 2\sigma_2\sigma_1 \right)
    + \sigma_1\left( \sigma_2\sigma_1^2 + \sigma_4 - \sigma_2^2 - \sigma_3\sigma_1 \right) \\
    - \sigma_1^2\left( \sigma_1^3 + \sigma_3 - 2\sigma_2\sigma_1 \right),
    |J| = k-1.\)
     \item Let  \(\Delta_L^{(5)}=
    - \left( \sigma_3\sigma_1^2 + \sigma_5 - \sigma_3\sigma_2 - \sigma_4\sigma_1 \right) 
    + \sigma_2\left( \sigma_1^3 + \sigma_3 - 2\sigma_2\sigma_1 \right)
    + \sigma_1\left( \sigma_2\sigma_1^2 + \sigma_4 - \sigma_2^2 - \sigma_3\sigma_1 \right) \\
    - \sigma_1^2\left( \sigma_1^3 + \sigma_3 - 2\sigma_2\sigma_1 \right),
    |L| = k-2.\)
      
    \item Let \( A^T \) denote the transpose of a matrix \( A \), and \( |A| \) denote its determinant.
    
    \item  Denote the following matrices by an \( n \times n \) Vandermonde matrix \(V_n(\bm{\alpha})\), 
    an \( (n+3) \times (n+3) \) diagonal matrix diag$(\bm{v})$ and an  \( (n+2) \times (n+2) \) diagonal matrix diag$(\bm{v_1})$:
    \[V_n(\bm{\alpha}) =
        \begin{pmatrix} 
        1  & \cdots & 1 \\
        \alpha_1  & \cdots & \alpha_n \\
        \vdots  & \ddots & \vdots \\
        \alpha_1^{n-1} & \cdots & \alpha_n^{n-1} 
       \end{pmatrix},
        \operatorname{diag}(\bm{v}) =
        \begin{pmatrix}
        v_1 & & & & & \\
        & \ddots & & & & \\
        & & v_n & & & \\
        & & & 1 & & \\
        & & & & 1 & \\
        & & & & & 1
        \end{pmatrix}\] and
        \[\operatorname{diag}(\bm{v_1}) =
        \begin{pmatrix}
        v_1 & & & & \\
        & \ddots & & & \\
        & & v_n & & \\
        & & & 1 & \\
        & & & & 1 
       \end{pmatrix}. \]
       
        \end{itemize}

\subsection{ETGRS Codes}
Let the integers \( k \) and \( n \) are positive and satisfy \( 3 \leq k \leq n \leq q \).
Choose $\bm{\alpha}  \in \mathbb{F}_q^n$ 
and $\bm{v} = (v_1, v_2, \dots, v_n) \in (\mathbb{F}_q^\ast)^n$.

The $[n, k]$ $ \mathcal{GRS}$ code is defined as
\[
\mathcal{GRS}_k(\bm{\alpha}, \bm{v}) = \left\{ \left( v_1 f(\alpha_1),v_2 f(\alpha_2), \dots, v_n f(\alpha_n) \right) \mid f(x) \in \mathcal{P}_k \right\},
\]
where $\mathcal{P}_k$ denotes the set of polynomials of degree less than $k$, including the zero polynomial, in $\mathbb{F}_q[x]$.

Next, we define the linear space $\mathcal{S}_{(n,k,\eta)}$ of twisted polynomials.

\begin{definition}\label{def 1} 
Let \( {\eta}  \in \mathbb{F}_q^* \) and 
the set of  the twisted polynomials be defined as
\[
\mathcal{S}_{(n,k,{\eta})}  = \left\{ f(x) = \sum_{i=0}^{k-1} f_i x^i +  \eta f_{k-1} x^{k+2} \mid f_i \in \mathbb{F}_q \right\}.
\]
\end{definition}

Then the TGRS code is defined as 
\[ \mathcal{TGRS}_k(\bm{\alpha}, \bm{v}) = \{ (v_1 f(\alpha_1), v_2 f(\alpha_2), \cdots, v_n f(\alpha_n)) \mid f(x) \in \mathcal{S}_{(n,k,{\eta})} \}.\] 
Then we introduce the new linear code $\mathcal{C}$ extended from the TGRS code.

\begin{definition}\label{def 2}
Let \( \bm{\alpha}\), \( \bm{v} \) and \( \eta \) be defined as above.
The extended TGRS code $\mathcal{C}$ is defined as
\[
\mathcal{C} = \left\{ \left( v_1 f(\alpha_1), v_2 f(\alpha_2), \dots, v_n f(\alpha_n), f_{k-1}, f_{k-2}, f_{k-3} + \delta f_{k-1} \right) \,\bigg|\, f(x) \in \mathcal{S}_{(n,k,{\eta})} \right\}
\]
with  the generator matrix  \[G = \begin{pmatrix}
1 & 1 & \cdots & 1 & 0 & 0 & 0\\
\alpha_1 & \alpha_2 & \cdots & \alpha_n & 0 & 0 & 0 \\
\vdots & \vdots & \ddots & \vdots & \vdots & \vdots& \vdots  \\
\alpha_1^{k-4} & \alpha_2^{k-4} & \cdots & \alpha_n^{k-4} & 0 & 0 & 0 \\
\alpha_1^{k-3} & \alpha_2^{k-3} & \cdots & \alpha_n^{k-3} & 0 & 0  & 1\\
\alpha_1^{k-2} & \alpha_2^{k-2} & \cdots & \alpha_n^{k-2} & 0 & 1 & 0\\
\alpha_1^{k-1} + \eta \alpha_1^{k+2} & \alpha_2^{k-1} + \eta \alpha_2^{k+2} & \cdots & \alpha_n^{k-1} + \eta \alpha_n^{k+2} & 1 & 0  & \delta
\end{pmatrix}\operatorname{diag}(\bm{v}),\]
where \( f_{k-3} \), \( f_{k-2} \) and \( f_{k-1} \) denote the coefficients of \( x^{k-3} \), \( x^{k-2} \), and \( x^{k-1} \) in \( f(x) \), respectively.
\end{definition}

Puncturing the code $\mathcal{C}$ by deleting the last coordinate, we obtain the code $\mathcal{C}_1$
whose generator matrix is
\[G_1 = \begin{pmatrix}
1 & 1 & \cdots & 1 & 0 & 0 \\
\alpha_1 & \alpha_2 & \cdots & \alpha_n & 0 & 0 \\
\vdots & \vdots & \ddots & \vdots & \vdots & \vdots  \\
\alpha_1^{k-4} & \alpha_2^{k-4} & \cdots & \alpha_n^{k-4} & 0 & 0  \\
\alpha_1^{k-3} & \alpha_2^{k-3} & \cdots & \alpha_n^{k-3} & 0 & 0  \\
\alpha_1^{k-2} & \alpha_2^{k-2} & \cdots & \alpha_n^{k-2} & 0 & 1  \\
\alpha_1^{k-1} + \eta \alpha_1^{k+2} & \alpha_2^{k-1} + \eta \alpha_2^{k+2} & \cdots & \alpha_n^{k-1} + \eta \alpha_n^{k+2} & 1 & 0 
\end{pmatrix}\operatorname{diag}(\bm{v_1}).\]

The following \Cref{lem 1,lem 2}  are needed to prove our main theorem.
\begin{lemma}[\hspace{-0.01em}\cite{abdukhalikov2026some}, Deﬁnition 2.2\hspace{-0.01em}]\label{lem 1}
The symbols $S_r$, \( u_i \), $\bm{\alpha}  \in \mathbb{F}_q^n$ are defined as before. Then
\[
\sum_{i=1}^{n} \alpha_i^{h} u_i = 
\begin{cases}
0 ,& \text{if }  0 \leq h \leq n-2, \\
S_{h-n+1}(\bm{\alpha}) ,& \text{if }  h \geq n-1.
\end{cases} \tag{1}\label{eq1}
\]
\end{lemma}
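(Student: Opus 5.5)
The identity is classical, and I would derive it from Lagrange interpolation (partial fractions) combined with a generating-function expansion.

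\textbf{Small exponents.} For $0\le h\le n-1$, the Lagrange interpolation formula at the distinct nodes $\alpha_1,\dots,\alpha_n$ gives
\[
x^h=\sum_{i=1}^n \alpha_i^h \prod_{j\ne i}\frac{x-\alpha_j}{\alpha_i-\alpha_j}.
\]
Comparing the coefficients of $x^{n-1}$ on both sides yields
\[
\sum_{i=1}^n \alpha_i^h u_i=
\begin{cases}
0, & \text{if } 0\le h\le n-2,\\
1=S_0, & \text{if } h=n-1.
\end{cases}
\]
This settles the first case and the base case $h=n-1$ of the second.

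\textbf{General $h\ge n-1$.} To handle all $h$ at once, I would set $g(x)=\prod_{j=1}^n(x-\alpha_j)$ and use the partial fraction decomposition
\[
\frac{1}{g(x)}=\sum_{i=1}^n \frac{u_i}{x-\alpha_i},
\]
which holds because $u_i=1/g'(\alpha_i)$. Next I substitute $x=1/t$ and multiply through by $t^{-1}$. The left side becomes
\[
\frac{t^{n-1}}{\prod_j(1-\alpha_j t)}=t^{n-1}\sum_{r\ge 0}S_r(\bm\alpha)\,t^r.
\]
The right side becomes
\[
\sum_i \frac{u_i}{1-\alpha_i t}=\sum_{h\ge 0}\Bigl(\sum_i u_i\alpha_i^h\Bigr)t^h.
\]
Both are identities of formal power series in $\mathbb{F}_q[[t]]$. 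Comparing the coefficients of $t^h$ gives $0$ for $h\le n-2$ and $S_{h-n+1}(\bm\alpha)$ for $h\ge n-1$. This formula contains the small-exponent case, so that part becomes a check rather than a separate step.

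\textbf{Main obstacle.} The only delicate point is rigor in the formal power series setting over a finite field. Each factor $1-\alpha_j t$ is invertible in $\mathbb{F}_q[[t]]$, so every manipulation is algebraic and no convergence is needed. One must also check that the expansion of $\prod_j(1-\alpha_j t)^{-1}$ produces exactly the complete symmetric polynomials $S_r$ as defined in the paper. This follows directly from multiplying the geometric series $\sum_{r_j\ge 0}\alpha_j^{r_j}t^{r_j}$ over $j$.
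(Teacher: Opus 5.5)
Your proof is correct. The paper gives no argument for this lemma: it imports the identity from the cited reference and uses it as a black box. Your generating-function derivation is therefore a self-contained replacement rather than a variant of something in the text.

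Its main advantage is that a single coefficient comparison in $t^{n-1}\prod_j(1-\alpha_j t)^{-1}=\sum_h\bigl(\sum_i u_i\alpha_i^h\bigr)t^h$ yields both cases at once. As you note, this makes the Lagrange step redundant.

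If you want the passage to power series fully formal, avoid substituting $x=1/t$ into a rational-function identity. Start from the polynomial identity $1=\sum_i u_i\prod_{j\ne i}(x-\alpha_j)$, which is Lagrange interpolation of the constant $1$. Apply the reversal map $p(x)\mapsto t^{n-1}p(1/t)$ on polynomials of degree at most $n-1$ to get $t^{n-1}=\sum_i u_i\prod_{j\ne i}(1-\alpha_j t)$ in $\mathbb{F}_q[t]$. Then divide by the unit $\prod_j(1-\alpha_j t)$ of $\mathbb{F}_q[[t]]$. The expansion of $u_i/(1-\alpha_i t)$ also builds in the convention $\alpha_i^0=1$ when $\alpha_i=0$, which the paper needs: its examples contain $\alpha_i=0$, and the first row of $G$ is all ones.

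The other standard route writes $\sum_i\alpha_i^h u_i=|W_h|/|V_n(\bm{\alpha})|$, where $W_h$ is $V_n(\bm{\alpha})$ with its last row replaced by $\bm{\alpha}^h$; this follows from cofactor expansion along that row. One then identifies this ratio of alternants with the complete symmetric polynomial. That viewpoint matches the determinant evaluations the paper carries out in the proof of Theorem~2. However, it requires the bialternant formula, which your argument avoids.
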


\begin{lemma}[\hspace{-0.01em}\cite{ball2015finite,dodunekov2000near}\hspace{-0.01em}]\label{lem 2}
Let \( G \) be the generator matrix of linear code $\mathcal{C}$, then
\begin{enumerate}[(1)]
    \item $\mathcal{C}$ is MDS  if and only if any \( k \) columns of \( G \) are linearly independent;
    \item $\mathcal{C}$ is AMDS  if and only if any \( k+1 \) columns have full rank and there exist \( k \) linearly dependent columns in \( G \);
    \item $\mathcal{C}$ is NMDS  if and only if any \( k-1 \) columns of \( G \) are linearly independent, any \( k+1 \) columns of \( G \) have full rank and there exists \( k \) linearly dependent columns in \( G \).
\end{enumerate}
\end{lemma}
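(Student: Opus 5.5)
The plan is to derive all three parts from two standard translations between the columns of the generator matrix and the minimum distances of $\mathcal{C}$ and $\mathcal{C}^{\perp}$. Throughout, write $N$ for the length of $\mathcal{C}$, let $G$ be a $k\times N$ generator matrix of full row rank $k$, and let $g_1,\dots,g_N$ be its columns.

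\emph{Translation (a), for the distance $d$ of $\mathcal{C}$.} Every codeword has the form $c=xG$ with $x\in\mathbb{F}_q^k$, and $c\neq 0$ if and only if $x\neq 0$. For a set $T\subseteq[N]$, $c$ vanishes on $T$ exactly when $x$ is orthogonal to every column $g_t$ with $t\in T$. A nonzero such $x$ exists if and only if the submatrix $G_T$ has rank less than $k$.

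Hence $\mathcal{C}$ has a nonzero codeword of weight at most $N-s$ if and only if some $s$ columns of $G$ have rank $<k$. Equivalently, $d\ge N-s+1$ if and only if every $s$ columns of $G$ have rank $k$.

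\emph{Translation (b), for the dual distance $d^{\perp}$.} A vector $y\in\mathbb{F}_q^N$ lies in $\mathcal{C}^{\perp}$ if and only if $Gy^T=0$, that is, $\sum_i y_i g_i=0$. So nonzero dual codewords of weight at most $s$ are exactly the nontrivial linear dependencies among at most $s$ columns of $G$. Therefore $d^{\perp}\ge s+1$ if and only if any $s$ columns of $G$ are linearly independent.

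\emph{Part (1).} The Singleton bound gives $d\le N-k+1$, so $\mathcal{C}$ is MDS if and only if $d\ge N-k+1$. Applying (a) with $s=k$, this holds if and only if every $k$ columns have rank $k$, i.e. are linearly independent.

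\emph{Part (2).} Again by (a), $d\ge N-k$ holds if and only if any $k+1$ columns have full rank $k$. By part (1), $d\le N-k$, i.e. $\mathcal{C}$ is not MDS, holds if and only if some $k$ columns are linearly dependent. Combining the two gives $d=N-k$ exactly under the stated conditions.

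\emph{Part (3).} By definition, $\mathcal{C}$ is NMDS when both $\mathcal{C}$ and $\mathcal{C}^{\perp}$ are AMDS. Since $\mathcal{C}^{\perp}$ has parameters $[N,N-k]$, it is AMDS if and only if $d^{\perp}=N-(N-k)=k$.

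By (b) with $s=k-1$, $d^{\perp}\ge k$ holds if and only if any $k-1$ columns of $G$ are independent. Also $d^{\perp}\le k$ holds if and only if some $k$ columns are dependent, which is exactly the dependence condition already needed in part (2). Combining this with part (2) yields the three conditions in (3): any $k-1$ columns independent, any $k+1$ columns of full rank, and some $k$ columns dependent. No step is a real obstacle; the only care needed is getting the indices right in the two translations and remembering that $\mathcal{C}^{\perp}$ has dimension $N-k$.
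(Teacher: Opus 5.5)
Your argument is correct and complete. Translations (a) and (b) are the standard correspondences: zero coordinates of $xG$ correspond to rank-deficient sets of columns of $G$, and codewords of $\mathcal{C}^{\perp}$ correspond to linear dependencies among columns. The bookkeeping in parts (1)--(3) is right, including the observation that the single condition ``some $k$ columns are dependent'' gives both $d\le N-k$ and $d^{\perp}\le k$. The paper gives no proof of its own; it only cites Ball and Dodunekov--Landjev, and the standard proof of this characterization rests on these same correspondences.
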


\subsection{Extended Codes}
In \cite{sun2024extended}, Sun et al. provided a kind description of extended codes. Let $\bm{t} = (t_1, t_2, \dots, t_n) \in \mathbb{F}_q^n$ be any nonzero vector. 
Any given $[n, k, d]$ code $\mathcal{C}$ over $\mathbb{F}_q$ can be extended into an $[n+1, k, \bar{d}]$ code $\overline{\mathcal{C}}(\bm{t})$ over $\mathbb{F}_q$ as follows:
\[
\overline{\mathcal{C}}(\bm{t}) = \left\{ (c_1, \dots, c_n, c_{n+1}) : (c_1, \dots, c_n) \in \mathcal{C}, c_{n+1} = \sum_{i=1}^n t_i c_i \right\}, \]
where $\bar{d} = d$ or $\bar{d} = d+1$.

The following lemma describes the relationship between the generator and parity-check matrices of the code and its extended code.

\begin{lemma}[\hspace{-0.01em}\cite{sun2024extended}\hspace{-0.01em}]\label{lem 3}
Let $\mathcal{C}$ be an $[n, k, d]$ linear code over $\mathbb{F}_q$ with a generator matrix $G$ and a parity-check matrix $H$, 
then the generator matrix and parity-check matrix for the extended code $\overline{\mathcal{C}}(\bm{t})$ are $(G, G\bm{t}^\mathrm{T})$ and
\[
\begin{pmatrix}
H & 0^\mathrm{T} \\
\bm{t} & -1
\end{pmatrix}
.\]
\end{lemma}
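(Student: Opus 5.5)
The plan is to verify both claims directly from the definition of $\overline{\mathcal{C}}(\bm{t})$, and then use a dimension count for the parity-check part.

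\textbf{Generator matrix.} Every codeword of $\mathcal{C}$ has the form $\bm{c}=\bm{m}G$ for some $\bm{m}\in\mathbb{F}_q^k$. Its appended coordinate is
\[
c_{n+1}=\sum_{i=1}^n t_ic_i=\bm{c}\,\bm{t}^{\mathrm T}=\bm{m}\,G\bm{t}^{\mathrm T}.
\]
Hence $(\bm{c},c_{n+1})=\bm{m}\,(G,\,G\bm{t}^{\mathrm T})$, so the rows of $(G,G\bm{t}^{\mathrm T})$ span $\overline{\mathcal{C}}(\bm{t})$. Deleting the last column returns $G$, which has rank $k$, so $(G,G\bm{t}^{\mathrm T})$ also has rank $k$. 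The map $(\bm{c},c_{n+1})\mapsto\bm{c}$ is therefore a bijection from $\overline{\mathcal{C}}(\bm{t})$ onto $\mathcal{C}$. It follows that $\dim\overline{\mathcal{C}}(\bm{t})=k$ and that $(G,G\bm{t}^{\mathrm T})$ is a generator matrix.

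\textbf{Parity-check matrix.} Let
\[
\overline H=\begin{pmatrix} H & 0^{\mathrm T}\\ \bm{t} & -1\end{pmatrix}.
\]
First I check orthogonality by computing
\[
(G,\,G\bm{t}^{\mathrm T})\,\overline H^{\mathrm T}=\bigl(GH^{\mathrm T},\; G\bm{t}^{\mathrm T}-G\bm{t}^{\mathrm T}\bigr)=0,
\]
using $GH^{\mathrm T}=0$. So $\overline{\mathcal{C}}(\bm{t})$ lies in the null space of $\overline H$.

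Next I show that $\overline H$, an $(n-k+1)\times(n+1)$ matrix, has full row rank $n-k+1$. The first $n-k$ rows are independent because $H$ has rank $n-k$, and they all end in $0$. The last row has entry $-1\neq 0$ in the final column, so it cannot be a combination of the first $n-k$ rows. The null space of $\overline H$ therefore has dimension $(n+1)-(n-k+1)=k$. This equals $\dim\overline{\mathcal{C}}(\bm{t})$, so the two spaces coincide and $\overline H$ is a parity-check matrix.

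The only step needing care is the dimension count: confirming that the extended code still has dimension exactly $k$, and that the appended row of $\overline H$ raises its rank by exactly one. Both follow at once from the bijectivity of the projection and from the $-1$ entry.
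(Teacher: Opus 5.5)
Your proof is correct and complete. The generator-matrix part (every codeword is $\bm{m}(G,G\bm{t}^{\mathrm T})$, and the rank stays $k$ because deleting the last column returns $G$) and the parity-check part (the orthogonality $(G,G\bm{t}^{\mathrm T})\overline{H}^{\mathrm T}=0$ plus the full-row-rank count forced by the $-1$ entry) are the standard direct verification; the paper gives no proof of its own here, since it simply cites Sun et al.
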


Recall the monomial equivalence relation between two linear codes of the same length.
\begin{definition}[\hspace{-0.01em}\cite{ball2015finite}\hspace{-0.01em}]
Let $\mathcal{C}_1$ and $\mathcal{C}_2$ be two linear codes of the same length over $\mathbb{F}_q$ and $G$ be a generator matrix of $\mathcal{C}_1$. Then $\mathcal{C}_1$ and $\mathcal{C}_2$ are monomially equivalent
 if there is a monomial matrix $M$ such that $G M$ is a generator matrix of $\mathcal{C}_2$.
\end{definition}

The following lemma establishes the relationship between the GRS property of an extended code and its original code.

\begin{lemma}[\hspace{-0.01em}\cite{wu2024more}, Lemma 5\hspace{-0.01em}]\label{lem 4}
If the extended code $\overline{\mathcal{C}}(\mathbf{t})$ is monomially equivalent to a \(\mathcal{GRS}\) code, then the original code $\mathcal{\mathcal{C}}$ is also monomially equivalent to a \(\mathcal{GRS}\) code.
\end{lemma}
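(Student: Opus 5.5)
The plan is to show that deleting the last coordinate of $\overline{\mathcal{C}}(\bm{t})$ gives back $\mathcal{C}$, and that this deletion corresponds, under a monomial map, to puncturing a $\mathcal{GRS}$ code at one coordinate, which again yields a $\mathcal{GRS}$ code.

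First, I will record that puncturing $\overline{\mathcal{C}}(\bm{t})$ at coordinate $n+1$ gives exactly $\mathcal{C}$. This is immediate from the definition, since every codeword of $\overline{\mathcal{C}}(\bm{t})$ is $(c_1,\dots,c_n,\sum t_ic_i)$ with $(c_1,\dots,c_n)\in\mathcal{C}$. Equivalently, by \Cref{lem 3}, deleting the last column of the generator matrix $(G, G\bm{t}^{\mathrm T})$ returns $G$.

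Next, suppose $\overline{\mathcal{C}}(\bm{t})$ is monomially equivalent to $\mathcal{GRS}_k(\bm{\beta},\bm{w})$ with $\bm{\beta}\in\mathbb{F}_q^{n+1}$ having distinct entries and $\bm{w}\in(\mathbb{F}_q^*)^{n+1}$. Concretely, there is a monomial matrix $M=PD$, with $P$ a permutation matrix and $D$ an invertible diagonal matrix, such that $G_{\mathrm{GRS}}M$ generates $\overline{\mathcal{C}}(\bm{t})$. Column $n+1$ of $G_{\mathrm{GRS}}M$ is a nonzero scalar multiple of some column $j$ of $G_{\mathrm{GRS}}$, where $j$ is the index that the permutation sends to $n+1$. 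Deleting column $n+1$ from $G_{\mathrm{GRS}}M$ is therefore the same as deleting column $j$ from $G_{\mathrm{GRS}}$ and then multiplying by the $n\times n$ monomial matrix $M'$. Here $M'$ is obtained from $M$ by removing row $j$ and column $n+1$, and it is still monomial.

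Then I will check that deleting column $j$ from the generator matrix of $\mathcal{GRS}_k(\bm{\beta},\bm{w})$ gives the generator matrix of $\mathcal{GRS}_k(\bm{\beta}',\bm{w}')$, where $\bm{\beta}'$ and $\bm{w}'$ are $\bm{\beta}$ and $\bm{w}$ with the $j$-th entry removed. The entries of $\bm{\beta}'$ remain distinct and those of $\bm{w}'$ remain nonzero. Combining the two observations, $\mathcal{C}$ is generated by the matrix of $\mathcal{GRS}_k(\bm{\beta}',\bm{w}')$ times $M'$, so $\mathcal{C}$ is monomially equivalent to a $\mathcal{GRS}$ code.

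The only point needing care, which I expect to be the main (mild) obstacle, is dimension: the punctured generator matrix must still have rank $k$, so that it generates a $k$-dimensional $\mathcal{GRS}$ code and not something smaller. This holds because $\mathcal{C}$ is an $[n,k]$ code, so $G$ has rank $k$, and hence $n\ge k$. Consequently the $k$ rows of the punctured Vandermonde-type matrix $(w'_i\beta_i'^{\,l})$ are independent, since the $\beta'_i$ are distinct and there are $n\ge k$ of them. So the punctured code is a genuine $[n,k]$ $\mathcal{GRS}$ code.
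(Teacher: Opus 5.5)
Your proof is correct and complete. Puncturing $\overline{\mathcal{C}}(\bm{t})$ at coordinate $n+1$ recovers $\mathcal{C}$, and the monomial matrix turns this into deleting a single coordinate $j$ of $\mathcal{GRS}_k(\bm{\beta},\bm{w})$, with $M'$ still monomial. The punctured code, having $n\ge k$ distinct evaluation points, is again a $k$-dimensional GRS code.

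The paper gives no proof of its own here and only cites \cite{wu2024more}; your puncturing argument is the natural proof of that cited result. One harmless point: $G_{\mathrm{GRS}}M$ equals $A(G,G\bm{t}^{\mathrm T})$ for some invertible $A$, not necessarily $(G,G\bm{t}^{\mathrm T})$ itself. This does not matter, since you only use that the punctured rows span $\mathcal{C}$.
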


\subsection{Schur Product}

\begin{definition}
Let $\boldsymbol{x} = (x_1, x_2, \dots, x_n)$, $\boldsymbol{y} = (y_1, y_2\dots, y_n) \in \mathbb{F}_q^n$. Then the Schur product of $\boldsymbol{x}$ and $\boldsymbol{y}$ is defined as
\[
\boldsymbol{x} \star \boldsymbol{y} = (x_1 y_1,x_2 y_2, \dots, x_n y_n).
\]
The Schur product of two codes $C_1$ and $C_2$ with length $n$ is defined as
\[
\mathcal{C}_1\star \mathcal{C}_2 = \{ \boldsymbol{c}_1\star \boldsymbol{c}_2 : \boldsymbol{c}_1 \in \mathcal{C}_1, \boldsymbol{c}_2 \in \mathcal{C}_2 \}.
\]
\end{definition}
In particular, for an $[n, k]$ linear code $\mathcal{C}$, the Schur product of $\mathcal{C}$ with itself is denoted by $\mathcal{C}^2$.

\begin{lemma}[\hspace{-0.01em}\cite{zhu2024+}, Remark 2.2\hspace{-0.01em}]\label{lem 5}
For any linear codes $\mathcal{C}_1$ and $\mathcal{C}_2$, if $\mathcal{C}_1 = \langle \boldsymbol{\beta}_1,\boldsymbol{\beta}_2, \dots, \boldsymbol{\beta}_{k_1} \rangle$ 
and $\mathcal{C}_2 = \langle \boldsymbol{\gamma}_1,\boldsymbol{\gamma}_2, \dots, \boldsymbol{\gamma}_{k_2} \rangle$ 
with $\boldsymbol{\beta}_i, \boldsymbol{\gamma}_j \in \mathbb{F}_q^n$ for $i = 1, 2, \dots, k_1$ and $j = 1, 2, \dots, k_2$, then
\[
\mathcal{C}_1 \star \mathcal{C}_2 = \langle \boldsymbol{\beta}_i \star \boldsymbol{\gamma}_j : i = 1, 2, \dots, k_1, j = 1, 2, \dots, k_2 \rangle.
\]
\end{lemma}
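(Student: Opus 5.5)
The plan is to use nothing but the bilinearity of the coordinatewise product $\star$. One interpretive point comes first. As displayed, $\mathcal{C}_1 \star \mathcal{C}_2$ is the set of all products $\boldsymbol{c}_1\star\boldsymbol{c}_2$, and this set need not be closed under addition. The lemma calls it a linear code, so I read $\mathcal{C}_1\star\mathcal{C}_2$ as the $\mathbb{F}_q$-linear span of $\{\boldsymbol{c}_1\star\boldsymbol{c}_2 : \boldsymbol{c}_1\in\mathcal{C}_1,\ \boldsymbol{c}_2\in\mathcal{C}_2\}$, which is the standard convention. The claim is then an equality of two subspaces of $\mathbb{F}_q^n$, and I prove it by double inclusion.

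\emph{Inclusion $\supseteq$.} Each generator $\boldsymbol{\beta}_i$ lies in $\mathcal{C}_1$ and each $\boldsymbol{\gamma}_j$ lies in $\mathcal{C}_2$. Hence every $\boldsymbol{\beta}_i\star\boldsymbol{\gamma}_j$ is one of the spanning products of $\mathcal{C}_1\star\mathcal{C}_2$. Since $\mathcal{C}_1\star\mathcal{C}_2$ is a subspace, it contains the span $\langle \boldsymbol{\beta}_i\star\boldsymbol{\gamma}_j\rangle$.

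\emph{Inclusion $\subseteq$.} First I record that $\star$ is $\mathbb{F}_q$-bilinear. Coordinatewise, $(a\boldsymbol{x}+b\boldsymbol{x}')\star\boldsymbol{y} = a(\boldsymbol{x}\star\boldsymbol{y})+b(\boldsymbol{x}'\star\boldsymbol{y})$, because in each coordinate this is just distributivity in $\mathbb{F}_q$; the same holds in the second argument. Now take arbitrary $\boldsymbol{c}_1=\sum_{i} a_i\boldsymbol{\beta}_i\in\mathcal{C}_1$ and $\boldsymbol{c}_2=\sum_j b_j\boldsymbol{\gamma}_j\in\mathcal{C}_2$. Bilinearity gives
\[
\boldsymbol{c}_1\star\boldsymbol{c}_2=\sum_{i=1}^{k_1}\sum_{j=1}^{k_2} a_i b_j\,(\boldsymbol{\beta}_i\star\boldsymbol{\gamma}_j)\in\langle \boldsymbol{\beta}_i\star\boldsymbol{\gamma}_j\rangle .
\]
So every spanning product of $\mathcal{C}_1\star\mathcal{C}_2$ lies in the subspace $\langle \boldsymbol{\beta}_i\star\boldsymbol{\gamma}_j\rangle$. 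Therefore the whole span $\mathcal{C}_1\star\mathcal{C}_2$ lies in it as well.

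Combining the two inclusions gives the equality. There is no real obstacle here. The only step needing care is the interpretive one: the identity holds for the linear span of the products, not for the bare set of products. With that reading fixed, the rest is a direct expansion using bilinearity.
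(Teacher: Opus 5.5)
Your proof is correct. The paper gives no proof of its own here; it imports the statement from \cite{zhu2024+}, Remark~2.2. Your double inclusion via the $\mathbb{F}_q$-bilinearity of $\star$ is the standard argument behind that remark.

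Your interpretive remark is necessary, not pedantic. The paper's displayed definition of $\mathcal{C}_1\star\mathcal{C}_2$ is the bare set of products, without the span, and under that literal reading the lemma is false. For example, in $\mathbb{F}_q^4$ take $\mathcal{C}_1=\langle(1,1,0,0),(0,0,1,1)\rangle$ and $\mathcal{C}_2=\langle(1,0,1,0),(0,1,0,1)\rangle$. The products are exactly the vectors $(a_1b_1,a_1b_2,a_2b_1,a_2b_2)$. So $(1,0,0,1)$ is not a product: it would force $a_1,b_2\neq 0$, hence $a_1b_2\neq 0$. On the other hand, $\langle \boldsymbol{\beta}_i\star\boldsymbol{\gamma}_j\rangle$ contains all four unit vectors and equals $\mathbb{F}_q^4$.

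The span reading is also the one the paper actually uses later. It computes $\dim(\mathcal{C}_1^2)$ in the proof of Theorem~\ref{Th 6} and compares it with Lemma~\ref{lem 6}, which only makes sense for the span.
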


The Schur square of a GRS code is also a GRS code, as stated below.

\begin{lemma}[\hspace{-0.01em}\cite{marquez2013non}, Proposition 10; \cite{zhu2024+}, Proposition 2.1\hspace{-0.01em}]\label{lem 6}
If $\mathcal{C}$ is an $[n,k]_q$ GRS code with $2 \leq k \leq \frac{n+1}{2}$, then $\dim(\mathcal{C}^2) = 2k-1$.
Furthermore, for $\frac{n}{2}+1 \le k \le n$, we obtain $d\big((\mathcal{C}^\perp)^2\big) = 2k-n+2$.
\end{lemma}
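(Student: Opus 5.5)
Both claims follow from the fact that the Schur square of a GRS code is again a GRS code, now with squared column multipliers and roughly doubled dimension. The plan is to identify that GRS code explicitly and then read off the dimension and minimum distance from the MDS property.

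\textbf{First claim.} By \Cref{lem 5}, $\mathcal{C}^2$ is spanned by the products $(\bm{v}\star\bm{\alpha}^i)\star(\bm{v}\star\bm{\alpha}^j)=\bm{v}^2\star\bm{\alpha}^{i+j}$ with $0\le i,j\le k-1$. As $i+j$ runs through all values $0,1,\dots,2k-2$, this gives
\[
\mathcal{C}^2=\mathcal{GRS}_{2k-1}(\bm{\alpha},\bm{v}^2).
\]
The hypothesis $k\le \frac{n+1}{2}$ means $2k-1\le n$. The vectors $\bm{v}^2\star\bm{\alpha}^h$, $0\le h\le 2k-2$, are then rows of a Vandermonde matrix $V_n(\bm{\alpha})$ scaled by the nonzero entries of $\bm{v}^2$. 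Because the $\alpha_i$ are distinct, these rows are linearly independent, so $\dim(\mathcal{C}^2)=2k-1$. The assumption $k\ge2$ only ensures that this is a genuine Schur square of dimension at least $3$.

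\textbf{Second claim.} The key input is the standard description of the dual of a GRS code. I would derive it from \Cref{lem 1}: for $0\le a\le k-1$ and $0\le b\le n-k-1$ we have $a+b\le n-2$, hence
\[
\sum_{i=1}^n v_i\alpha_i^{a}\cdot \frac{u_i}{v_i}\alpha_i^{b}=\sum_{i=1}^n u_i\alpha_i^{a+b}=0.
\]
It follows that $\mathcal{GRS}_{n-k}(\bm{\alpha},\bm{v}')$ with $v'_i=u_i/v_i\ne0$ is contained in $\mathcal{C}^\perp$. Both codes have dimension $n-k$, so $\mathcal{C}^\perp=\mathcal{GRS}_{n-k}(\bm{\alpha},\bm{v}')$.

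Applying the first computation to this code gives
\[
(\mathcal{C}^\perp)^2=\mathcal{GRS}_{2(n-k)-1}(\bm{\alpha},\bm{v}'^2),
\]
provided $1\le 2(n-k)-1\le n$. The left inequality needs $k\le n-1$. The right inequality is automatic, since $k\ge \frac n2+1$ gives $2(n-k)-1\le n-3$. A GRS code of dimension at most $n$ is MDS, because any $2(n-k)-1$ of its columns form a scaled Vandermonde matrix. Therefore
\[
d\big((\mathcal{C}^\perp)^2\big)=n-(2n-2k-1)+1=2k-n+2.
\]

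\textbf{Main obstacle.} The only delicate step is the identification $\mathcal{C}^\perp=\mathcal{GRS}_{n-k}(\bm{\alpha},\bm{u}/\bm{v})$, which is exactly where \Cref{lem 1} enters. A secondary point is the boundary case $k=n$: there $\mathcal{C}^\perp=\{0\}$, and the formula $2k-n+2$ only makes sense under a convention for the minimum distance of the zero code. So the substantive content of the second claim is the range $\frac n2+1\le k\le n-1$, and I would state the boundary case separately.
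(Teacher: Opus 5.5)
Your proof is correct, and it is the standard argument behind the cited results. The paper gives no proof of its own; it only prefaces the lemma with the remark that the Schur square of a GRS code is again a GRS code, which is exactly your key step, and you combine it with the usual description $\mathcal{C}^\perp=\mathcal{GRS}_{n-k}(\bm{\alpha},\bm{u}/\bm{v})$, correctly derived from \Cref{lem 1}. Your caveat about the degenerate case $k=n$, where $\mathcal{C}^\perp=\{0\}$ and the value $2k-n+2=n+2$ is not meaningful, is also well taken.
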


\section{Extended code }\label{Extended}

\begin{theorem}\label{Th 1}
Let \( \bm{t}= (t_1,t_2 \dots, t_{n+2}) \in \mathbb{F}_q^{n+2} \) and 
\(t_i = v_i^{-1}\, u_i \alpha_i^{n+2-k}\) for \(   1 \leq i \leq n,\)
\(
t_{n+1} = \delta  - S_2 - \eta\, S_5, \)
$t_{n+2} = S_1.$
Then the extended code \[ \mathcal{C} = \overline{\mathcal{C}_1}(\bm{t}),\] 
where
\(
S_1 = \sigma_1,  S_2 = \sigma_1^2 - \sigma_2
\)
and \(S_5 =\)
\(- \left( \sigma_3 \sigma_1^2 + \sigma_5 - \sigma_3 \sigma_2 - \sigma_4 \sigma_1 \right) + \sigma_2 \left( \sigma_1^3 + \sigma_3 - 2\sigma_2 \sigma_1 \right) 
+ \sigma_1 \left( \sigma_2 \sigma_1^2 + \sigma_4 - \sigma_2^2 - \sigma_3 \sigma_1 \right)  - \sigma_1^2 \left( \sigma_1^3 + \sigma_3 - 2\sigma_2 \sigma_1 \right).\)
\end{theorem}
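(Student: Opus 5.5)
By \Cref{lem 3}, $\overline{\mathcal{C}_1}(\bm{t})$ has generator matrix $(G_1, G_1\bm{t}^{T})$. The matrix $G$ of $\mathcal{C}$ is exactly $G_1$ with one extra column appended, namely $(0,\dots,0,1,0,\delta)^T$. So it suffices to check, row by row, that $G_1\bm{t}^T$ equals this column.

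Write $g_j(x)=x^j$ for $0\le j\le k-2$ and $g_{k-1}(x)=x^{k-1}+\eta x^{k+2}$. Row $j$ of $G_1$ is $(v_1g_j(\alpha_1),\dots,v_ng_j(\alpha_n),\epsilon_j,\epsilon'_j)$, where $\epsilon_j=1$ only for $j=k-1$ and $\epsilon'_j=1$ only for $j=k-2$. Since $t_i=v_i^{-1}u_i\alpha_i^{n+2-k}$, the factors $v_i$ cancel, and the $j$-th entry of $G_1\bm{t}^T$ is
\[
\sum_{i=1}^n u_i\alpha_i^{n+2-k}g_j(\alpha_i)+\epsilon_j t_{n+1}+\epsilon'_j t_{n+2}.
\]
We now evaluate this sum row by row using \Cref{lem 1}.

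For a monomial $x^j$, the sum is $\sum_i u_i\alpha_i^{n+2-k+j}$. The exponent is at most $n-2$ exactly when $j\le k-4$, so these rows give $0$.

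For $j=k-3$ the exponent is $n-1$, giving $S_0=1$; this matches the third-to-last entry $1$.

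For $j=k-2$ the exponent is $n$, giving $S_1(\bm\alpha)$. Adding $t_{n+2}$, this row must vanish, so we need $t_{n+2}=-S_1(\bm\alpha)$.

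For $j=k-1$ the sum gives $S_2(\bm\alpha)+\eta S_5(\bm\alpha)$ (exponents $n+1$ and $n+4$). Adding $t_{n+1}=\delta-S_2-\eta S_5$ yields $\delta$, provided the paper's $S_2,S_5$ equal the complete symmetric polynomials $S_2(\bm\alpha),S_5(\bm\alpha)$.

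The main obstacle is this identification between the complete homogeneous symmetric polynomials $S_r(\bm\alpha)$ and the stated expressions in the $\sigma$'s. Because $\sigma_r$ carries the sign $(-1)^r$, the $\sigma_r$ are exactly the coefficients of $\prod_i(x-\alpha_i)=x^n+\sigma_1x^{n-1}+\dots+\sigma_n$. The generating function $\sum_r S_r y^r=\prod_i(1-\alpha_iy)^{-1}$ then gives the Newton-type recursion
\[
S_r+\sigma_1S_{r-1}+\sigma_2S_{r-2}+\dots+\sigma_rS_0=0 .
\]
I would iterate this recursion to compute $S_1,\dots,S_5$ and compare with the stated formulas. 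The bracketed terms should turn out to be $S_3$ and $S_4$ up to sign, that is, $\Delta^{(3)}$ and $\Delta^{(4)}$ resp.

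This check also settles the sign of $t_{n+2}$. The recursion gives $S_1(\bm\alpha)=-\sigma_1$. So the required $t_{n+2}=-S_1(\bm\alpha)$ equals $\sigma_1$, which agrees with the stated $t_{n+2}=S_1=\sigma_1$ once the symbol $S_1$ is read as $\sigma_1$. The same care is needed for $S_2$: the recursion gives $S_2(\bm\alpha)=\sigma_1^2-\sigma_2$, matching the statement. I would likewise verify $S_5$ against the stated expression term by term, since any sign slip there would break the last row. Once these identities hold, all rows match and $\mathcal{C}=\overline{\mathcal{C}_1}(\bm t)$.
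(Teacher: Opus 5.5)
Your proposal is correct and follows the paper's proof: reduce via \Cref{lem 3} to checking $G_1\bm{t}^T=(0,\dots,0,1,0,\delta)^T$, then evaluate each row with \Cref{lem 1}. Your reading $t_{n+2}=-S_1(\bm{\alpha})=\sigma_1$ resolves the double use of the symbol $S_1$ that the paper's notation blurs. The symmetric-function identities that the paper only asserts do check out with your recursion, with all $\sigma_r$ taken over $[n]$: $S_2(\bm{\alpha})=\sigma_1^2-\sigma_2$, $S_3(\bm{\alpha})=-\Delta^{(3)}$, $S_4(\bm{\alpha})=\Delta^{(4)}$, and $S_5(\bm{\alpha})=-\sigma_1S_4-\sigma_2S_3-\sigma_3S_2-\sigma_4S_1-\sigma_5$ expands to exactly the stated $S_5$. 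One small correction to your heuristic: no single bracket in $S_5$ is $\pm S_4$; instead the last two bracketed terms combine into $-\sigma_1\Delta^{(4)}$.
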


\begin{proof} 
By \Cref{lem 3}, we only need to prove
\[
G_1 \bm{t}^T = (0, \dots, 0, 1, 0, \delta)^T.
\]
From \eqref{eq1}, it is clear that
\[
\sum_{i=1}^{n} v_i \,t_i\, \alpha_i^r =
\begin{cases}
0, & 0 \leq r \leq k-4, \\
1, & r = k-3.
\end{cases}
\]
Next by \Cref{lem 1}, it is easy to verify that
\begin{itemize}
    \item     For $r=k-2$ \\
    \[
    \begin{aligned}
    \sum_{i=1}^{n} v_i t_i \alpha_i^{k-2} + t_{n+2}
    &= \sum_{i=1}^{n} \alpha_i^{n} u_i + t_{n+2}\\
    &= S_1 + t_{n+2}\\
    &= -\sigma_1 + t_{n+2} \\
    &= 0.
    \end{aligned}
    \]

    \item     For $r=k-1$ \\
    \[
    \begin{aligned}
    \sum_{i=1}^{n} v_i t_i \left( \alpha_i^{k-1} + \eta  \alpha_i^{k+2} \right) + t_{n+1}
    &= \sum_{i=1}^{n} v_i t_i \alpha_i^{k-1} + \eta  \sum_{i=1}^{n} v_i t_i \alpha_i^{k+2} + t_{n+1} \\
    &= \sum_{i=1}^{n} u_i \alpha_i^{n+1} + \eta  \sum_{i=1}^{n} u_i \alpha_i^{n+4} + t_{n+1} \\
    &= S_2 + \eta S_5 + t_{n+1} \\
    &= \delta.
    \end{aligned}
    \]
\end{itemize}
This completes the proof.
\end{proof}

\section{MDS property}\label{MDS}
In this section, we derive the necessary and sufficient conditions for $\mathcal{C}$ to be an MDS code.
\begin{theorem}\label{Th 2}
Suppose that \( 3 \leq k \leq n \leq q \) and \( \eta \in \mathbb{F}_q^* \).
The ETGRS code $\mathcal{C}$ is MDS if and only if the following five conditions simultaneously hold :
\begin{enumerate}[(1)]
    \setlength{\itemindent}{2em}
    \item $\eta^{-1} \neq \Delta_I^{(3)}$ for any subset $I \subseteq [n]$ with $|I| = k$.
    \item $\sigma_1(J) \neq \eta \Delta_J^{(4)}$ for any subset $J \subseteq [n]$ with $|J| = k-1$.
    \item $\sigma_2(J)  + \delta \neq \eta \left( \Delta_J^{(5)} + \sigma_1(J)  \Delta_J^{(4)} \right)$ for any subset $J \subseteq [n]$ with $|J| = k-1$.
    \item $\sigma_1(L)  \neq 0$ for any subset $L \subseteq [n]$ with $|L| = k-2$.
    \item $\eta \Delta_L^{(5)} \neq \Delta_L^{(2)} - \delta$ for any subset $L \subseteq [n]$ with $|L| = k-2$.
\end{enumerate}
\end{theorem}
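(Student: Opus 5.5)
By \Cref{lem 2}(1), $\mathcal{C}$ is MDS if and only if every $k$ columns of $G$ are linearly independent. Since $\operatorname{diag}(\bm{v})$ has nonzero diagonal entries, we may drop it and work with the matrix whose first $n$ columns are $g(\alpha_i)=(1,\alpha_i,\dots,\alpha_i^{k-2},\alpha_i^{k-1}+\eta\alpha_i^{k+2})^T$ and whose last three columns are $e_k$, $e_{k-1}$ and $e_{k-2}+\delta e_k$. We then split into cases according to how many of the three extra columns are chosen: none, one, two, or all three. In each case we write the $k\times k$ determinant in terms of $\sigma_r$ of the chosen index set, and require it to be nonzero.

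\textbf{Case of $k$ columns from the first $n$, indexed by $I$ with $|I|=k$.} This determinant is a Vandermonde determinant with its last row replaced by $x^{k-1}+\eta x^{k+2}$. Reduce $x^{k+2}$ modulo $\prod_{i\in I}(x-\alpha_i)=x^k+\sigma_1x^{k-1}+\dots+\sigma_k$, keeping only the coefficient of $x^{k-1}$. That coefficient is $(-1)^3$ times the complete homogeneous polynomial $h_3$ of the $\alpha_i$, i.e.\ $-\Delta_I^{(3)}$ with the sign convention of $\sigma_r$. Equivalently, one can apply \Cref{lem 1} to the coefficients $u_i$ of $I$. The determinant therefore equals $V(I)\,(1-\eta\Delta_I^{(3)})$, which gives condition (1).

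\textbf{Cases with one extra column.} Choosing $e_k$ deletes the last row, leaving a nonsingular Vandermonde matrix, so this subcase imposes nothing. Choosing $e_{k-1}$ together with $J$, $|J|=k-1$, expands to the minor obtained by deleting row $k-1$. This minor is a determinant in rows $1,\dots,x^{k-3},x^{k-1}+\eta x^{k+2}$. Reducing modulo $\prod_{j\in J}(x-\alpha_j)$, we need the coefficient of $x^{k-2}$ of $x^{k-1}+\eta x^{k+2}$. This is $-\sigma_1+\eta\cdot(\text{coef. of }x^{k-2}\text{ in }x^{k+2}\bmod)$, and the latter is $\Delta_J^{(4)}$ up to sign; this gives condition (2). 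Choosing $e_{k-2}+\delta e_k$ gives a combination of two minors: one with row $k-2$ deleted, and $\delta$ times the Vandermonde minor with row $k$ deleted. The first involves the coefficients of both $x^{k-2}$ and $x^{k-3}$ in the reduced polynomial. Writing the reduced row in the basis of the remaining rows, these produce $\sigma_2$, $\Delta_J^{(5)}$ and the cross term $\sigma_1\Delta_J^{(4)}$, which yields condition (3).

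\textbf{Cases with two or three extra columns.} For the pair $\{e_k,e_{k-1}\}$ with $|L|=k-2$, deleting the last two rows leaves a Vandermonde matrix, so this is always fine. For $\{e_{k-1},e_{k-2}+\delta e_k\}$, the minor reduces to rows $1,\dots,x^{k-4},x^{k-1}+\eta x^{k+2}$ plus a $\delta$-term with rows $1,\dots,x^{k-4},x^{k-2}$. Expressed through coefficients modulo $\prod_{l\in L}(x-\alpha_l)$, this gives $\Delta_L^{(2)}-\eta\Delta_L^{(5)}-\delta$, i.e.\ condition (5). For $\{e_k,e_{k-2}+\delta e_k\}$, the minor is rows $1,\dots,x^{k-4},x^{k-2}$ on $L$, which equals $-\sigma_1(L)V(L)$; this gives condition (4). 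With all three extra columns, the remaining $k-3$ rows $1,\dots,x^{k-4}$ form a Vandermonde matrix, so this subset is always independent.

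The main obstacle is the bookkeeping. We must compute the coefficients of $x^{k+2}$ reduced modulo $\prod(x-\alpha_j)$ for sets of sizes $k$, $k-1$ and $k-2$, and then match them with the specific expressions $\Delta^{(3)},\Delta^{(4)},\Delta^{(5)}$, with correct signs. I would do this by the recursion $x^{m}\equiv -\sum_r\sigma_r x^{m-r}$, iterated up to three times, tracking the top coefficients. I would then verify the identities against the formulas in the notation list, and cross-check them with \Cref{lem 1} via $S_r$. Combining all the cases, the determinant conditions are exactly (1)–(5).
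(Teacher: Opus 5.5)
Your case split (by how many of the three appended columns are chosen) and your device of reading each minor as $|V|$ times a coefficient of a remainder modulo $\prod(x-\alpha_j)$ are the same as in the paper's proof. Your cases for conditions (1)--(4) come out right.

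\textbf{The gap.} It is in the subcase with columns $n+2$ and $n+3$ (your $\{e_{k-1},e_{k-2}+\delta e_k\}$). There the value $\Delta_L^{(2)}-\eta\Delta_L^{(5)}-\delta$ you assert is not what the computation gives. Laplace expansion along the two appended columns yields
\[
\det(B_5)=-\det\bigl[x^0,\dots,x^{k-4},\,x^{k-1}+\eta x^{k+2}\bigr]_L+\delta\,|V_{k-2}(\bm{\alpha})|.
\]
The $\delta$-term comes from deleting rows $k-1$ and $k$, so it is the plain Vandermonde on rows $x^0,\dots,x^{k-3}$. With the rows $1,\dots,x^{k-4},x^{k-2}$ you wrote, it would instead carry a factor $-\sigma_1(L)$.

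Write $h_r$ for the complete symmetric polynomial of degree $r$ in $\{\alpha_l\}_{l\in L}$. The first minor is $|V_{k-2}|$ times the $x^{k-3}$-coefficient of $(x^{k-1}+\eta x^{k+2})\bmod\prod_{l\in L}(x-\alpha_l)$, which is $h_2+\eta h_5$. With the paper's definitions, $\Delta_L^{(2)}=h_2$ and $\Delta_L^{(5)}=h_5$; for example, if $|L|=1$ and $\alpha_l=a$, then $\Delta_L^{(5)}=-\sigma_1^5=a^5$. Because $x^{k-1}$ and $\eta x^{k+2}$ lie in the same row, the minor is linear in that row and the two contributions enter with the same sign. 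Hence
\[
\det(B_5)=\bigl(\delta-\Delta_L^{(2)}-\eta\Delta_L^{(5)}\bigr)\,|V_{k-2}(\bm{\alpha})|,
\]
and the correct requirement is $\eta\Delta_L^{(5)}\neq\delta-\Delta_L^{(2)}$. In odd characteristic this is not equivalent to condition (5).

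\textbf{The statement itself is false as written.} The paper's computation of $\det(B_5)$ contains exactly the same slip: it turns $-\eta\,|x^0,\dots,x^{k-4},x^{k+2}|$ into $+\eta\Delta_L^{(5)}|V_{k-2}|$. So no amount of careful bookkeeping will make your last step land on condition (5). A counterexample: take $q=7$, $n=k=3$, $\bm{\alpha}=(1,2,4)$, $\bm{v}=(1,1,1)$, $\eta=1$, $\delta=2$.
\begin{itemize}
\item Condition (1): $\Delta_I^{(3)}=6\neq 1=\eta^{-1}$.
\item Conditions (2) and (3): for the pairs of values $\{1,2\},\{1,4\},\{2,4\}$ one gets $(\sigma_1,\eta\Delta^{(4)})=(4,3),(2,5),(1,6)$ and $(\sigma_2+\delta,\ \eta(\Delta^{(5)}+\sigma_1\Delta^{(4)}))=(4,5),(6,3),(3,6)$.
\item Condition (4): all $\alpha_i\neq 0$.
\item Condition (5): $\eta a^5\neq a^2-\delta$ for $a=1,2,4$, since $1\neq 6$, $4\neq 2$, $2\neq 0$.
\end{itemize}
So (1)--(5) all hold. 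Yet $f(x)=5+x^2+x^5$ gives the codeword $(0,6,2,1,0,0)$ of weight $3<4=(n+3)-k+1$. Equivalently, the columns $(1,1,2)^T$, $(0,1,0)^T$, $(1,0,2)^T$ for $\alpha_1$, $n+2$, $n+3$ are dependent. Carried out correctly, your method proves the theorem with (5) replaced by $\eta\Delta_L^{(5)}\neq\delta-\Delta_L^{(2)}$; in characteristic $2$ the two versions agree.

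\textbf{Two smaller imprecisions.}
\begin{itemize}
\item In the first case the coefficient is $h_3(\alpha_I)$ itself, which equals $-\Delta_I^{(3)}$; there is no extra factor $(-1)^3$.
\item In the column-$(n+3)$ case only the $x^{k-3}$-coefficient of the remainder survives, because the $x^{k-2}$ row kills the $x^{k-2}$-coefficient. The cross term appears because the $x^{k-3}$-coefficient of $x^{k+2}\bmod\prod_{j\in J}(x-\alpha_j)$ is $h_5+\sigma_1h_4=\Delta_J^{(5)}+\sigma_1\Delta_J^{(4)}$, with $h_r$ now taken in the $\alpha_j$, $j\in J$.
\end{itemize}
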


\begin{proof} [$\mathbf{Proof}$]  From \Cref{lem 2}, the   $[n+3,k]$  code $\mathcal{C}$ is  an MDS code if and only if any $k \times k$ submatrix of its generator matrix must be invertible.
\begin{enumerate}[(1)] 
    \item Suppose we choose \(k\) columns from the first \(n\) columns of \(G\), then the matrix is
\[B_1 = 
\begin{pmatrix}
1 & \cdots & 1 \\
\alpha_{i_1}& \cdots& \alpha_{i_k}\\
\vdots &  & \vdots \\
\alpha_{i_1}^{k-2} & \cdots & \alpha_{i_k}^{k-2} \\
\alpha_{i_1}^{k-1} + \eta \alpha_{i_1}^{k+2} & \cdots & \alpha_{i_k}^{k-1} + \eta \alpha_{i_k}^{k+2}
\end{pmatrix}
\]

\[
\begin{aligned}
\det(B_1) 
&= |V_k(\bm{\alpha})| - \eta \left( \sigma_1^3 + \sigma_3 - 2\sigma_2\sigma_1 \right) |V_k(\bm{\alpha})| \\[6pt]
&= \left( 1 - \eta \left( \sigma_1^3 + \sigma_3 - 2\sigma_2\sigma_1 \right) \right) |V_k(\bm{\alpha})|.
\end{aligned}\]
Therefore  $\det(B_1) \neq 0 $ if and only if 
\(\eta^{-1} \neq  \sigma_1^3 + \sigma_3 - 2\sigma_2\sigma_1 \) if and only if  \[\eta^{-1} \neq \Delta_I^{(3)}\]
for any subset $I \subseteq [n]$ with $|I| = k.$

 \item Suppose we choose \(k-1\) columns from the first \(n\) columns of \(G\), together with the \((n+1)\)-th column, then the matrix is
\[\begin{pmatrix}
1 & \cdots & 1 & 0 \\
\alpha_{i_1}& \cdots& \alpha_{i_{k-1}}&0\\
\vdots &  & \vdots & \vdots &\\
\alpha_{i_1}^{k-2} & \cdots & \alpha_{i_{k-1}}^{k-2} & 0 \\
\alpha_{i_1}^{k-1} + \eta \alpha_{i_1}^{k+2} & \cdots & \alpha_{i_{k-1}}^{k-1} + \eta \alpha_{i_{k-1}}^{k+2} & 1
\end{pmatrix},
\]
it's evident that this submatrix has full rank.
 \item Suppose we choose \(k-1\) columns from the first \(n\) columns of \(G\), together with the \((n+2)\)-th column, then the matrix is
\[
B_2 = 
\begin{pmatrix}
1 & \cdots & 1 & 0 \\
\alpha_{i_1}& \cdots& \alpha_{i_{k-1}}&0\\
\vdots &  & \vdots & \vdots \\
\alpha_{i_1}^{k-3} & \cdots & \alpha_{i_{k-1}}^{k-3} & 0 \\
\alpha_{i_1}^{k-2} & \cdots & \alpha_{i_{k-1}}^{k-2} & 1 \\
\alpha_{i_1}^{k-1} + \eta \alpha_{i_1}^{k+2} & \cdots & \alpha_{i_{k-1}}^{k-1} + \eta \alpha_{i_{k-1}}^{k+2} & 0
\end{pmatrix}.
\]
we can have 
\[
\begin{aligned}
\det(B_2)
&= -
\begin{vmatrix}
1 & \cdots & 1 \\
\alpha_{i_1}& \cdots& \alpha_{i_{k-1}}\\
\vdots &  & \vdots \\
\alpha_{i_1}^{k-3} & \cdots & \alpha_{i_{k-1}}^{k-3} \\
\alpha_{i_1}^{k-1} + \eta \alpha_{i_1}^{k+2} & \cdots & \alpha_{i_{k-1}}^{k-1} + \eta \alpha_{i_{k-1}}^{k+2}
\end{vmatrix} \\
&= - 
\begin{vmatrix}
1 & \cdots & 1 \\
\alpha_{i_1}& \cdots& \alpha_{i_{k-1}}\\
\vdots &  & \vdots \\
\alpha_{i_1}^{k-3} & \cdots & \alpha_{i_{k-1}}^{k-3} \\
\alpha_{i_1}^{k-1} & \makebox[0pt][c]{$\cdots$} & \alpha_{i_{k-1}}^{k-1}
\end{vmatrix}
- \eta
\begin{vmatrix}
1 & \cdots & 1 \\
\alpha_{i_1}& \cdots& \alpha_{i_{k-1}}\\
\vdots &  & \vdots \\
\alpha_{i_1}^{k-3} & \cdots & \alpha_{i_{k-1}}^{k-3} \\
\alpha_{i_1}^{k+2} & \makebox[0pt][c]{$\cdots$} & \alpha_{i_{k-1}}^{k+2}
\end{vmatrix} \\
&= \left[\sigma_1 - \eta
\left(
- \left( \sigma_2 \sigma_1^2 + \sigma_4 - \sigma_2^2 - \sigma_3 \sigma_1 \right)
+ \sigma_1 \left( \sigma_1^3 + \sigma_3 - 2\sigma_2 \sigma_1 \right)
    \right)\right]|V_{k-1}(\bm{\alpha})|
\end{aligned}
\]
Thus \(\det(B_2) \neq 0\) if and only if 
\[
\sigma_1 \neq \eta
\left(
- \left( \sigma_2 \sigma_1^2 + \sigma_4 - \sigma_2^2 - \sigma_3 \sigma_1 \right)
+ \sigma_1 \left( \sigma_1^3 + \sigma_3 - 2\sigma_2 \sigma_1 \right)
\right)
\]
if and only if
\(
\sigma_1(J)  \neq \eta \Delta_J^{(4)}
\) for any subset $J \subseteq [n]$ with $|J| = k-1$.

\item Suppose we choose \(k-1\) columns from the first \(n\) columns of \(G\), together with the \((n+3)\)-th column, then the matrix is
\[
B_3 = 
\begin{pmatrix}
1 & \cdots & 1 & 0 \\
\alpha_{i_1}& \cdots& \alpha_{i_{k-1}}&0\\
\vdots & & \vdots & \vdots \\
\alpha_{i_1}^{k-3} & \cdots & \alpha_{i_{k-1}}^{k-3} & 1 \\
\alpha_{i_1}^{k-2} & \cdots & \alpha_{i_{k-1}}^{k-2} & 0 \\
\alpha_{i_1}^{k-1} + \eta\,\alpha_{i_1}^{k+2} & \cdots & \alpha_{i_{k-1}}^{k-1} + \eta\,\alpha_{i_{k-1}}^{k+2} & \delta
\end{pmatrix}
\]
Then expanding along the last column, we decompose  \(\det(B_3)\) into the following form:
\[
\begin{aligned}
\det(B_3)  &= 
\begin{vmatrix}
1 & \cdots & 1  \\
\alpha_{i_1}& \cdots& \alpha_{i_{k-1}}\\
\vdots & & \vdots \\
\alpha_{i_1}^{k-4} & \cdots & \alpha_{i_{k-1}}^{k-4} \\
\alpha_{i_1}^{k-2} & \cdots & \alpha_{i_{k-1}}^{k-2} \\
\alpha_{i_1}^{k-1} + \eta\,\alpha_{i_1}^{k+2} & \cdots & \alpha_{i_{k-1}}^{k-1} + \eta\,\alpha_{i_{k-1}}^{k+2}
\end{vmatrix}
+ \delta  |V_{k-1}(\bm{\alpha})| \\
&=
\begin{vmatrix}
1 & \cdots & 1  \\
\alpha_{i_1}& \cdots& \alpha_{i_{k-1}}\\
\vdots & & \vdots \\
\alpha_{i_1}^{k-4} & \cdots & \alpha_{i_{k-1}}^{k-4} \\
\alpha_{i_1}^{k-2} & \cdots & \alpha_{i_{k-1}}^{k-2} \\
\alpha_{i_1}^{k-1} & \cdots & \alpha_{i_{k-1}}^{k-1}
\end{vmatrix}
+ \eta
\begin{vmatrix}
1 & \cdots & 1  \\
\alpha_{i_1}& \cdots& \alpha_{i_{k-1}}\\
\vdots & & \vdots \\
\alpha_{i_1}^{k-4} & \cdots & \alpha_{i_{k-1}}^{k-4} \\
\alpha_{i_1}^{k-2} & \cdots & \alpha_{i_{k-1}}^{k-2} \\
\alpha_{i_1}^{k+2} & \cdots & \alpha_{i_{k-1}}^{k+2}
\end{vmatrix}
+ \delta  |V_{k-1}(\bm{\alpha})| \\
&= \left[
\sigma_2 - \eta\left(
- \left( \sigma_2^2\sigma_1 + \sigma_5 - 2\sigma_3\sigma_2 \right)
+ \sigma_1\left( \sigma_2\sigma_1^2 + \sigma_4 - \sigma_3\sigma_1 - \sigma_2^2 \right)
\right) + \delta
\right] |V_{k-1}(\bm{\alpha})| \\
\end{aligned}\]
Thus \(\det(B_3) \neq 0\) if and only if \[
\sigma_2 + \delta \neq \eta\left(
- \left( \sigma_2^2\sigma_1 + \sigma_5 - 2\sigma_3\sigma_2 \right)
+ \sigma_1\left( \sigma_2\sigma_1^2 + \sigma_4 - \sigma_3\sigma_1 - \sigma_2^2 \right)
\right) \]  if and only if 
 \(\sigma_2(J)  + \delta \neq \eta ( \Delta_J^{(5)} + \sigma_1(J)  \Delta_J^{(4)} )\)
for any subset $J \subseteq [n]$ with $|J| = k-1$.
\item Suppose we choose \(k-2\) columns from the first \(n\) columns of \(G\), together with the \((n+1)\)-th and \((n+2)\)-th columns, then the matrix is
\[
\begin{aligned}
\begin{pmatrix}
1 & \cdots & 1 & 0 & 0 \\
\alpha_{i_1}& \cdots& \alpha_{i_{k-2}}&0&0\\
\vdots & & \vdots & \vdots & \vdots \\
\alpha_{i_1}^{k-3} & \cdots & \alpha_{i_{k-2}}^{k-3} & 0 & 0 \\
\alpha_{i_1}^{k-2} & \cdots & \alpha_{i_{k-2}}^{k-2} & 0 & 1 \\
\alpha_{i_1}^{k-1} + \eta\,\alpha_{i_1}^{k+2} & \cdots & \alpha_{i_{k-2}}^{k-1} + \eta\,\alpha_{i_{k-2}}^{k+2} & 1 & 0
\end{pmatrix}
\end{aligned}
\]
Hence, the submatrix is of full rank.

\item Suppose we choose \(k-2\) columns from the first \(n\) columns of \(G\), together with the \((n+1)\)-th and \((n+3)\)-th columns, then the matrix is
\[
B_4 = 
\begin{pmatrix}
1 & \cdots & 1 & 0 & 0 \\
\alpha_{i_1}& \cdots& \alpha_{i_{k-2}}&0&0\\
\vdots & & \vdots & \vdots & \vdots \\
\alpha_{i_1}^{k-3} & \cdots & \alpha_{i_{k-2}}^{k-3} & 0 & 1 \\
\alpha_{i_1}^{k-2} & \cdots & \alpha_{i_{k-2}}^{k-2} & 0 & 0 \\
\alpha_{i_1}^{k-1} + \eta\,\alpha_{i_1}^{k+2} & \cdots & \alpha_{i_{k-2}}^{k-1} + \eta\,\alpha_{i_{k-2}}^{k+2} & 1 & \delta
\end{pmatrix} \]
\[
\det(B_4) = -
\begin{vmatrix}
0 & 1 \\
1 & \delta
\end{vmatrix}
\begin{vmatrix}
1 & \cdots & 1 \\
\alpha_{i_1}& \cdots& \alpha_{i_{k-2}}\\
\vdots &  & \vdots \\
\alpha_{i_1}^{k-4} & \cdots & \alpha_{i_{k-2}}^{k-4} \\
\alpha_{i_1}^{k-2} & \cdots & \alpha_{i_{k-2}}^{k-2}
\end{vmatrix} 
= - \sigma_1 \, |V_{k-2}(\bm{\alpha})|
\]
Thus \(\det(B_4) \neq 0\) if and only if \(\sigma_1(L)  \neq 0\) for any subset $L \subseteq [n]$ with $|L| = k-2$.

\item Suppose we choose \(k-2\) columns from the first \(n\) columns of \(G\), together with the \((n+2)\)-th and \((n+3)\)-th columns, then the matrix is
\[B_5 = 
\begin{pmatrix}
1 & \cdots & 1 & 0 & 0 \\
\alpha_{i_1}& \cdots& \alpha_{i_{k-2}}&0&0\\
\vdots & & \vdots & \vdots & \vdots \\
\alpha_{i_1}^{k-3} & \cdots & \alpha_{i_{k-2}}^{k-3} & 0 & 1 \\
\alpha_{i_1}^{k-2} & \cdots & \alpha_{i_{k-2}}^{k-2} & 1 & 0 \\
\alpha_{i_1}^{k-1} + \eta\,\alpha_{i_1}^{k+2} & \cdots & \alpha_{i_{k-2}}^{k-1} + \eta\,\alpha_{i_{k-2}}^{k+2} & 0 & \delta
\end{pmatrix}\]
\[
\begin{aligned}
\det(B_5)&=
\begin{vmatrix}
0 & 1 \\
1 & 0
\end{vmatrix}
\begin{vmatrix}
1 & \cdots & 1  \\
\alpha_{i_1}& \cdots& \alpha_{i_{k-2}}\\
\vdots & & \vdots   \\
\alpha_{i_1}^{k-4} & \cdots & \alpha_{i_{k-2}}^{k-4} \\
\alpha_{i_1}^{k-1} + \eta\,\alpha_{i_1}^{k+2} & \cdots & \alpha_{i_{k-2}}^{k-1} + \eta\,\alpha_{i_{k-2}}^{k+2}
\end{vmatrix} 
+\begin{vmatrix}
1 & 0 \\
0 & \delta
\end{vmatrix}
|V_{k-2}(\bm{\alpha})|\\
&= -
\begin{vmatrix}
1 & \cdots & 1  \\
\alpha_{i_1}& \cdots& \alpha_{i_{k-2}}\\
\vdots & & \vdots   \\
\alpha_{i_1}^{k-4} & \cdots & \alpha_{i_{k-2}}^{k-4} \\
\alpha_{i_1}^{k-1}  & \cdots & \alpha_{i_{k-2}}^{k-1} 
\end{vmatrix}
- \eta
\begin{vmatrix}
1 & \cdots & 1  \\
\alpha_{i_1}& \cdots& \alpha_{i_{k-2}}\\
\vdots & & \vdots   \\
\alpha_{i_1}^{k-4} & \cdots & \alpha_{i_{k-2}}^{k-4} \\
\alpha_{i_1}^{k+2}  & \cdots & \alpha_{i_{k-2}}^{k+2} 
\end{vmatrix}
+ \delta |V_{k-2}(\bm{\alpha})| \\
&= \Bigg(
-(\sigma_1^2 - \sigma_2)
+ \eta\Bigg[
- \left( \sigma_3\sigma_1^2 + \sigma_5 - \sigma_3\sigma_2 - \sigma_4\sigma_1 \right) 
+ \sigma_2\left( \sigma_1^3 + \sigma_3 - 2\sigma_2\sigma_1 \right)\\
&\qquad + \sigma_1\left( \sigma_2\sigma_1^2 + \sigma_4 - \sigma_2^2 - \sigma_3\sigma_1 \right) 
- \sigma_1^2\left( \sigma_1^3 + \sigma_3 - 2\sigma_2\sigma_1 \right)
\Bigg]
+ \delta
\Bigg)
|V_{k-2}(\bm{\alpha})|\\
&= \left(
- \Delta_L^{(2)} + \eta\, \Delta_L^{(5)} + \delta
\right) |V_{k-2}(\bm{\alpha})| \end{aligned}
\]
Thus \(\det(B_5) \neq 0\) if and only if 
\(\eta\, \Delta_L^{(5)} \neq \Delta_L^{(2)} - \delta\) for any subset $L \subseteq [n]$ with $|L| = k-2$.

\item Suppose we choose \(k-3\) columns from the first \(n\) columns of \(G\), together with the \((n+1)\)-th, \((n+2)\)-th and \((n+3)\)-th columns, then the matrix is
\[
\begin{aligned}
B_6 &= 
\begin{pmatrix}
1 & \cdots & 1 & 0 & 0 & 0 \\
\alpha_{i_1}& \cdots& \alpha_{i_{k-3}}&0&0&0\\
\vdots & & \vdots & \vdots & \vdots & \vdots \\
\alpha_{i_1}^{k-3} & \cdots & \alpha_{i_{k-3}}^{k-3} & 0 & 0 & 1 \\
\alpha_{i_1}^{k-2} & \cdots & \alpha_{i_{k-3}}^{k-2} & 0 & 1 & 0 \\
\alpha_{i_1}^{k-1} + \eta\,\alpha_{i_1}^{k+2} & \cdots & \alpha_{i_{k-3}}^{k-1} + \eta\,\alpha_{i_{k-3}}^{k+2} & 1 & 0 & \delta
\end{pmatrix}
\end{aligned}
\]
It follows that the submatrix has full rank.
\end{enumerate}
This complets the proof.
\end{proof}

\begin{example}
Let $q=13$, $n = 5$, $k = 3$, $\alpha = (1, 2, 5, 6, 7)$.
By \Cref{Th 2} and employing Magma, taking $(\eta, \delta) = (9, 9)$ gives that $\mathcal{C}$ is an MDS code with parameters $[8, 3, 6]$.
\end{example}

\begin{example}
Let $\xi$ be a primitive element of $\mathbb{F}_{8}$. Let $n=4$, $k=4$, $\alpha=(1,\xi^3,\xi^5,\xi^6)$.
From \Cref{Th 2} and employing Magma, all pairs $(\eta,\delta)=(\xi^t,1)$ with $1\le t\le 6$ can generate  MDS codes with identical parameters $[7,4,4]$.
\end{example}

\section{AMDS property}\label{AMDS}

\begin{theorem}\label{Th 3}
Suppose that \( 3 \leq k \leq n \leq q \) and 
\( \eta \in \mathbb{F}_q^* \), then \( \mathcal{C} \) is AMDS if and only if the following six conditions hold simultaneously:
\begin{enumerate}[(1)]
    \item For any subset \( M \subseteq [n] \) with \( |M| = k+1 \), there exists a subset \( I \subset M \) with \( |I| = k \) such that
    \[
    \eta^{-1} \neq \Delta_I^{(3)}.
    \]
    \item For any subset \( I \subseteq [n] \) with \( |I| = k \), there exists a subset \( J \subset I \) with \( |J| = k-1 \) such that
    \[
    \sigma_1(J) \neq \eta \Delta_J^{(4)}.
    \]
    \item For any subset \( I \subseteq [n] \) with \( |I| = k \), there exists a subset \( J \subset I \) with \( |J| = k-1 \) such that
    \[
    \sigma_2(J)  + \delta \neq \eta \left( \Delta_J^{(5)} + \sigma_1(J)  \Delta_J^{(4)} \right).
    \]
    \item For any subset \( J \subseteq [n] \) with \( |J| = k-1 \), there exists a subset \( L \subset J \) with \( |L| = k-2 \) such that
    \[
    \sigma_1(L) \neq 0.
    \]
    \item For any subset \( J \subseteq [n] \) with \( |J| = k-1 \), there exists a subset \( L \subset J \) with \( |L| = k-2 \) such that
    \[
    \eta \Delta_L^{(5)} \neq \Delta_L^{(2)} - \delta.
    \]
    \item One of the following conditions holds:
    \begin{enumerate} [1)]
        \item There exists a subset \( I \subset [n] \) with \( |I| = k \) such that \( \eta^{-1} = \Delta_I^{(3)} \).
        \item There exists a subset \( J \subset [n] \) with \( |J| = k-1 \) such that \( \sigma_1(J)  = \eta \Delta_J^{(4)} \).
        \item There exists a subset \( J \subset [n] \) with \( |J| = k-1 \) such that \[\sigma_2(J)  + \delta = \eta \left( \Delta_J^{(5)} + \sigma_1(J)  \Delta_J^{(4)} \right). \]
        \item There exists a subset \( L \subset [n] \) with \( |L| = k-2 \) such that \( \sigma_1(L)  = 0 \).
        \item There exists a subset \( L \subset [n] \) with \( |L| = k-2 \) such that \( \eta \Delta_L^{(5)} = \Delta_L^{(2)} - \delta \).
    \end{enumerate}
\end{enumerate}
\end{theorem}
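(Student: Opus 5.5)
We use \Cref{lem 2}(2): $\mathcal{C}$ is AMDS if and only if (a) every $k+1$ columns of $G$ have rank $k$, and (b) some $k$ columns of $G$ are linearly dependent. The diagonal scaling by $\operatorname{diag}(\bm{v})$ does not affect ranks, so we work with $G$ without the $v_i$. Condition (b) is exactly the negation of the MDS criterion in \Cref{Th 2}. The proof of \Cref{Th 2} splits the $k$-subsets of columns into eight types. Three of them (first $n$ plus column $n+1$; $k-2$ plus columns $n+1,n+2$; $k-3$ plus all three extra columns) are always nonsingular. The other five have determinants $\det(B_1),\dots,\det(B_5)$, each a nonzero Vandermonde factor times an explicit expression. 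Hence some $k$ columns are dependent if and only if one of the five equalities 1)--5) in condition (6) holds. This settles (6).

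For (a), fix a set $T$ of $k+1$ columns and split by how many of the three extra columns it contains. Then $T$ has full rank $k$ if and only if at least one of its $k$-subsets has nonzero determinant.

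\textbf{No extra columns.} $T=M\subseteq[n]$ with $|M|=k+1$. Its $k$-subsets are exactly the $B_1$-type matrices on $I\subset M$, which gives (1).

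\textbf{One extra column.} If it is column $n+1$, then $T$ contains the $k$-subset ($k-1$ of the $\alpha$-columns together with column $n+1$), which is always nonsingular, so there is no condition. If it is column $n+2$, with $I\subseteq[n]$ and $|I|=k$, the $k$-subsets are $B_1(I)$ and $B_2(J)$ for $J\subset I$. Similarly, column $n+3$ gives $B_1(I)$ and $B_3(J)$.

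\textbf{Two extra columns.} Any pair containing $n+1$ together with $n+2$ gives an always-nonsingular subset. The pair $\{n+1,n+3\}$ with $k-1$ $\alpha$-columns $J$ includes the subset formed by $J$ with column $n+1$, which is always invertible. The pair $\{n+2,n+3\}$ with $J$, $|J|=k-1$, has $k$-subsets $B_2(J)$, $B_3(J)$ and $B_5(L)$ for $L\subset J$.

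\textbf{Three extra columns.} $T$ contains $k-2$ $\alpha$-columns $L$ and columns $n+1,n+2,n+3$. It includes the subset formed by $L$ with columns $n+1,n+2$, which is always invertible.

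So full rank of every $(k+1)$-set reduces to the following disjunctions:
(i) for every $M$, some $I\subset M$ has $\det B_1(I)\neq0$;
(ii) for every $I$, either $\det B_1(I)\ne0$ or some $J\subset I$ has $\det B_2(J)\neq 0$;
(iii) the same as (ii) with $B_3$ in place of $B_2$;
(iv) for every $J$, one of $\det B_2(J)$, $\det B_3(J)$, $\det B_5(L)$ for $L\subset J$ is nonzero.

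I also need the $B_4$ conditions. These come from $\{n+1,n+3\}$ with a $J$ chosen so that the subset of $J$ with column $n+1$ does not occur in the relevant count; I would recheck whether that case in fact forces condition (4).

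\textbf{Main obstacle.} The main obstacle is reconciling these disjunctive conditions with the cleaner separate statements (2)--(5) in the theorem. The conditions there drop the alternatives $\det B_1(I)\neq0$ (and $B_2/B_3$ in (iv)) and add (4), which my case analysis does not obviously force. My plan is to first verify carefully which $(k+1)$-subsets genuinely require each condition. The first thing I would try is to show that the extra disjuncts are redundant under the hypotheses, or that the theorem should be read with those alternatives implicitly included. If redundancy fails, I would state the equivalence in the disjunctive form derived above, and check that it specializes to conditions (1)--(5) in the generic situation the paper intends.
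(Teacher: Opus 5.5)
Your case analysis is sound. Condition (6) is exactly the failure of the MDS criterion of Theorem 2. Every $(k+1)$-set containing column $n+1$ contains an always-invertible $k$-subset. Full rank of all $(k+1)$-sets is therefore equivalent to your disjunctions (i)--(iv), and no $B_4$ condition arises anywhere. (The paper's proof reaches (2)--(5) by deleting an $\alpha$-column ``without loss of generality'' in each $E_i$, which silently discards the other minors, so your split is the more accurate one.)

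But the proposal stops exactly where the remaining work is. You do not show that (i)--(iv) are equivalent to (1), (2), (3), (5). Your guess about where (4) comes from is wrong, since no $(k+1)$-set forces it. Falling back to a disjunctive statement would prove a different theorem. Two observations close the gap.

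First, use Cramer's rule: if $P$ is an invertible $k\times k$ matrix and $w=Px$, then replacing the $j$-th column of $P$ by $w$ multiplies $\det P$ by $x_j$.
\begin{itemize}
\item Conditions (ii) and (iii). If $\det B_1(I)\neq 0$, write column $n+2$ (resp.\ $n+3$) as $B_1(I)x$. This column is nonzero, so some $x_j\neq 0$, and then $\det B_2(I\setminus\{j\})\neq 0$ (resp.\ $\det B_3(I\setminus\{j\})\neq 0$) up to sign. So the alternative $\det B_1(I)\neq 0$ is redundant.
\item Condition (iv). If $\det B_2(J)\neq 0$, write column $n+3$ as $B_2(J)x$. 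Were all $\alpha$-coordinates of $x$ zero, column $n+3$ would be a multiple of column $n+2$. It is not, since it has a $1$ in the row of $\alpha^{k-3}$. Hence some $\det B_5(J\setminus\{j\})\neq 0$. The case $\det B_3(J)\neq 0$ is symmetric: column $n+2$ is not a multiple of column $n+3$.
\end{itemize}
Thus (i)--(iv) are exactly (1), (2), (3), (5).

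Second, condition (4) holds for every code in the family. Suppose $\sigma_1(J\setminus\{j\})=\sigma_1(J)+\alpha_j=0$ for all $j\in J$. Then every $\alpha_j$ with $j\in J$ equals $-\sigma_1(J)$. This is impossible because the $\alpha_i$ are distinct and $|J|=k-1\geq 2$. So including (4) in the list changes nothing.

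With these two facts added, your argument proves the theorem as stated.
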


\begin{proof} [$\mathbf{Proof}$]
\( \mathcal{C} \) is AMDS if and only if  there exist \( k \) columns of \( G \) with rank at most \( k-1 \) and any \( k+1 \) columns of \( G \) are of  full rank.

The first condition holds if and only if one of the following conditions holds:
\begin{enumerate} [1)]
    \setlength{\itemindent}{2em}
            \item There exists a subset \( I \subset [n] \) with \( |I| = k \) such that \( \eta^{-1} = \Delta_I^{(3)} \).
            \item There exists a subset \( J \subset [n] \) with \( |J| = k-1 \) such that \( \sigma_1(J)  = \eta \Delta_J^{(4)} \).
            \item There exists a subset \( J \subset [n] \) with \( |J| = k-1 \) such that \[ \sigma_2(J)  + \delta = \eta \left( \Delta_J^{(5)} + \sigma_1(J)  \Delta_J^{(4)} \right). \]
            \item There exists a subset \( L \subset [n] \) with \( |L| = k-2 \) such that \( \sigma_1(L)  = 0 \).
            \item There exists a subset \( L \subset [n] \) with \( |L| = k-2 \) such that \( \eta \Delta_L^{(5)} = \Delta_L^{(2)} - \delta \).
    \end{enumerate}
Next, we consider the second condition that any \(k+1\) columns of \(G\) are of full rank 
if and only if the rank of every \(k \times (k+1)\) submatrix of \(G\) is exactly \(k\). 
For any \(k \times (k+1)\) submatrix \(E_i\) of \(G\), we have \(\operatorname{rank}(E_i) = k\) 
if and only if there exists a matrix \( E_i{(j)} \) such that \( \det(E_i{(j)}) \neq 0 .\)
We discuss this by the following cases:

\begin{enumerate}[(1)]
\item Suppose we select \( k+1 \) columns from the first \( n \) columns of \( G \), then the corresponding \( k \times (k+1) \) submatrix is
\[
E_1 = \begin{pmatrix}
1 & \cdots & 1 \\
\alpha_{i_1} & \cdots &\alpha_{i_{k+1}} \\
\vdots & & \vdots \\
\alpha_{i_1}^{k-1} + \eta \alpha_{i_1}^{k+2} & \cdots & \alpha_{i_{k+1}}^{k-1} + \eta \alpha_{i_{k+1}}^{k+2}
\end{pmatrix},
\]
where \( M = \{i_1,i_2, \dots, i_{k+1}\} \subseteq [n] \) and \( 1 \leq j \leq k+1.\)

Without loss of generality, assume \( j = k+1 \), note that \( B_1 = E_1{( k+1)} \). By \Cref{Th 2},
\(
\det(E_1{( k+1)}) = \det(B_1) \neq 0
\)
if and only if \( \eta^{-1} \neq \Delta_I^{(3)} \).

Thus, \( \operatorname{rank}(E_1) = k \) if and only if for any subset \( M \subseteq [n] \), 
then there exists a subset \( I \subset M \) with \( |I| = k \) such that
\(    \eta^{-1} \neq \Delta_I^{(3)}.\)

\item 
 Suppose we take \(k\) columns from the first \(n\) columns of \(G\) together with the \((n+1)\)-th column. Then the corresponding \(k \times (k+1)\) submatrix is
\[
\begin{pmatrix}
1 & \cdots & 1 & 0 \\
\alpha_{i_1} & \cdots &\alpha_{i_{k}} & 0\\
\vdots & & \vdots & \vdots \\
\alpha_{i_1}^{k-2} & \cdots & \alpha_{i_k}^{k-2} & 0 \\
\alpha_{i_1}^{k-1} + \eta \alpha_{i_1}^{k+2} & \cdots & \alpha_{i_k}^{k-1} + \eta \alpha_{i_k}^{k+2} & 1
\end{pmatrix},
\]
where \(\{i_1, i_2,\dots,i_k\} \subseteq [n]\).
It is easy to verify that this submatrix has full rank.

\item Suppose we take \(k\) columns from the first \(n\) columns of \(G\) together with the \((n+2)\)-th column. Then the corresponding \(k \times (k+1)\) submatrix is
\[
E_2 = 
\begin{pmatrix}
1 & \cdots & 1 & 0 \\
\alpha_{i_1} & \cdots &\alpha_{i_{k}} & 0\\
\vdots & & \vdots & \vdots \\
\alpha_{i_1}^{k-2} & \cdots & \alpha_{i_k}^{k-2} & 1 \\
\alpha_{i_1}^{k-1} + \eta \alpha_{i_1}^{k+2} & \cdots & \alpha_{i_k}^{k-1} + \eta \alpha_{i_k}^{k+2} & 0
\end{pmatrix},
\]
where \(I = \{i_1, i_2, \dots, i_k\} \subseteq [n]\) and \(1 \leq j \leq k.\)

Without loss of generality, assume \(j = k\), note that \(B_2 = E_2{ (k)}\). By \Cref{Th 2},
\(
\det(E_2{ (k)}) = \det(B_2) \neq 0
\)
if and only if \(\sigma_1 \neq \eta \Delta_J^{(4)}\).

Therefore \(\operatorname{rank}(E_2) = k\) if and only if for any subset \(I \subseteq [n]\), 
there exists a subset \(J \subset I\) with \(|J| = k-1\) such that \(\sigma_1(J) \neq \eta \Delta_J^{(4)}\).

\item Suppose we take \(k\) columns from the first \(n\) columns of \(G\) together with the \((n+3)\)-th column. Then the corresponding \(k \times (k+1)\) submatrix is
\[
E_3 = 
\begin{pmatrix}
1 & \cdots & 1 & 0 \\
\alpha_{i_1} & \cdots &\alpha_{i_{k}} & 0\\
\vdots & & \vdots & \vdots \\
\alpha_{i_1}^{k-3} & \cdots & \alpha_{i_k}^{k-3} & 1 \\
\alpha_{i_1}^{k-2} & \cdots & \alpha_{i_k}^{k-2} & 0 \\
\alpha_{i_1}^{k-1} + \eta \alpha_{i_1}^{k+2} & \cdots & \alpha_{i_k}^{k-1} + \eta \alpha_{i_k}^{k+2} & \delta
\end{pmatrix},
\]
where \(I = \{i_1, i_2,\dots, i_k\} \subseteq [n]\) and \(1 \leq j \leq k.\)

Without loss of generality, assume \(j = k\). Note that \(E_3{(k)} = B_3\). By \Cref{Th 2},
\(
\det(E_3{(k)}) = \det(B_3) \neq 0
\)
if and only if \(\sigma_2 + \delta \neq \eta \left( \Delta_J^{(5)} + \sigma_1 \Delta_J^{(4)} \right)\).

So \(\operatorname{rank}(E_3) = k\) if and only if for any subset \(I \subseteq [n]\), there exists a subset \(J \subset I\) with \(|J| = k-1\) such that
\(
\sigma_2(J) + \delta \neq \eta \left( \Delta_J^{(5)} + \sigma_1(J) \Delta_J^{(4)} \right).
\)

\item 
Suppose we take \(k-1\) columns from the first \(n\) columns of \(G\), together with the \((n+1)\)-th and \((n+2)\)-th columns. Then the corresponding \(k \times (k+1)\) submatrix is
\[
\begin{pmatrix}
1 & \cdots & 1 & 0 & 0 \\
\alpha_{i_1} & \cdots &\alpha_{i_{k-1}} & 0& 0\\
\vdots & & \vdots & \vdots & \vdots \\
\alpha_{i_1}^{k-3} & \cdots & \alpha_{i_{k-1}}^{k-3} & 0 & 0 \\
\alpha_{i_1}^{k-2} & \cdots & \alpha_{i_{k-1}}^{k-2} & 0 & 1 \\
\alpha_{i_1}^{k-1} + \eta \alpha_{i_1}^{k+2} & \cdots & \alpha_{i_{k-1}}^{k-1} + \eta \alpha_{i_{k-1}}^{k+2} & 1 & 0
\end{pmatrix},
\]
where \(J = \{i_1, i_2, \dots, i_{k-1}\} \subseteq [n]\).
Obviously, this submatrix has full rank.

\item
Suppose we take \(k-1\) columns from the first \(n\) columns of \(G\), together with the \((n+1)\)-th and \((n+3)\)-th columns. Then the corresponding \(k \times (k+1)\) submatrix is
\[
E_4 = 
\begin{pmatrix}
1 & \cdots & 1 & 0 & 0 \\
\alpha_{i_1} & \cdots &\alpha_{i_{k-1}} & 0& 0\\
\vdots & & \vdots & \vdots & \vdots \\
\alpha_{i_1}^{k-3} & \cdots & \alpha_{i_{k-1}}^{k-3} & 0 & 1 \\
\alpha_{i_1}^{k-2} & \cdots & \alpha_{i_{k-1}}^{k-2} & 0 & 0 \\
\alpha_{i_1}^{k-1} + \eta \alpha_{i_1}^{k+2} & \cdots & \alpha_{i_{k-1}}^{k-1} + \eta \alpha_{i_{k-1}}^{k+2} & 1 & \delta
\end{pmatrix},
\]
where \(J = \{i_1, i_2, \dots, i_{k-1}\} \subseteq [n]\) and \(1 \leq j \leq k-1.\)

Without loss of generality, assume \(j = k-1\). Note that \(E_4{(k-1)} = B_4\). By Theorem,
\(
\det(E_4{(k-1)}) = \det(B_4) \neq 0
\)
if and only if \(\sigma_1 \neq 0\).

Then \(\operatorname{rank}(E_4) = k\) if and only if for any subset \(J \subseteq [n]\), 
there exists a subset \(L \subset J\) with \(|L| = k-2\) such that \(\sigma_1(L) \neq 0\).

\item  
 Suppose we take \(k-1\) columns from the first \(n\) columns of \(G\), together with the \((n+2)\)-th and \((n+3)\)-th columns. Then the corresponding \(k \times (k+1)\) submatrix is
\[
E_5 = 
\begin{pmatrix}
1 & \cdots & 1 & 0 & 0 \\
\alpha_{i_1} & \cdots &\alpha_{i_{k-1}} & 0& 0\\
\vdots & & \vdots & \vdots & \vdots \\
\alpha_{i_1}^{k-3} & \cdots & \alpha_{i_{k-1}}^{k-3} & 0 & 1 \\
\alpha_{i_1}^{k-2} & \cdots & \alpha_{i_{k-1}}^{k-2} & 1 & 0 \\
\alpha_{i_1}^{k-1} + \eta \alpha_{i_1}^{k+2} & \cdots & \alpha_{i_{k-1}}^{k-1} + \eta \alpha_{i_{k-1}}^{k+2} & 0 & \delta
\end{pmatrix},
\]
where \(J = \{i_1, i_2, \dots, i_{k-1}\} \subseteq [n]\) and  \(1 \leq j \leq k-1.\)

Without loss of generality, assume \(j = k-1\). Note that \(E_5{(k-1)} = B_5\). By Theorem,
\(
\det(E_5{(k-1)}) = \det(B_5) \neq 0
\)
if and only if \(\eta \Delta_L^{(5)} \neq \Delta_L^{(2)} - \delta\).

Hence \(\operatorname{rank}(E_5) = k\) if and only if for any subset \(J \subseteq [n]\), there exists a subset \(L \subset J\) with \(|L| = k-2\) such that
\(
\eta \Delta_L^{(5)} \neq \Delta_L^{(2)} - \delta.
\)

\item 
 Suppose we take \(k-2\) columns from the first \(n\) columns of \(G\), together with the \((n+1)\)-th, \((n+2)\)-th and \((n+3)\)-th columns. Then the corresponding \(k \times (k+1)\) submatrix is
\[
\begin{pmatrix}
1 & \cdots & 1 & 0 & 0 & 0 \\
\alpha_{i_1} & \cdots &\alpha_{i_{k-2}} & 0& 0& 0\\
\vdots & & \vdots & \vdots & \vdots & \vdots \\
\alpha_{i_1}^{k-3} & \cdots & \alpha_{i_{k-2}}^{k-3} & 0 & 0 & 1 \\
\alpha_{i_1}^{k-2} & \cdots & \alpha_{i_{k-2}}^{k-2} & 0 & 1 & 0 \\
\alpha_{i_1}^{k-1} + \eta \alpha_{i_1}^{k+2} & \cdots & \alpha_{i_{k-2}}^{k-1} + \eta \alpha_{i_{k-2}}^{k+2} & 1 & 0 & \delta
\end{pmatrix},
\]
where \(\{i_1, i_2, \dots, i_{k-2}\} \subseteq [n]\).
Obviously, this submatrix has full rank.
\end{enumerate}

This completes the proof.
\end{proof}

\begin{example}
Let $\xi$ be a primitive element of $\mathbb{F}_8$. Let $n = 5$, $k = 3$, $\alpha = (1, \xi, \xi^2, \xi^4, \xi^5)$.
Taking $(\eta, \delta) = (\xi^2, 0)$ gives that $\mathcal{C}$ is an AMDS code with parameters $[8, 3, 5]$.
Using Magma, we obtain many other $(\eta, \delta)$ pairs including 
$(\xi^2,1),(\xi^2,\xi^2),(\xi^2,\xi^3),(\xi^2,\xi^4),(\xi^2,\xi^5)$, 
$(\xi^4,0),(\xi^4,\xi),(\xi^4,\xi^3),(\xi^4,\xi^4),(\xi^4,\xi^6)$, 
$(\xi^5,0),(\xi^5,1),(\xi^5,\xi),(\xi^5,\xi^3),(\xi^5,\xi^4)$, 
$(\xi^6,0),(\xi^6,1)$,
$(\xi^6,\xi^2),(\xi^6,\xi^4),(\xi^6,\xi^5),(\xi^6,\xi^6) $
that yield the same AMDS code parameters.
\end{example}

\section{ The AMDS of their dual codes}\label{C^}
\begin{theorem}\label{Th 4}
Suppose that \(3 \leq k \leq n \leq q\) and \(\eta \in \mathbb{F}_q^*\).
Then \(\mathcal{C}^\perp\) is AMDS if and only if one of the following conditions holds:
\begin{enumerate}[(1)]
    \item There exists a subset \(I \subseteq [n]\) with \(|I| = k\) such that
   \(\eta^{-1} = \Delta_I^{(3)}. \)
    \item There exists a subset \(J \subseteq [n]\) with \(|J| = k-1\) such that
   \(\sigma_1(J) = \eta \Delta_J^{(4)}.\)
    \item There exists a subset \(J \subseteq [n]\) with \(|J| = k-1\) such that
   \[\sigma_2(J)+ \delta = \eta \left( \Delta_J^{(5)} + \sigma_1(J) \Delta_J^{(4)} \right). \]
    \item There exists a subset \(L \subseteq [n]\) with \(|L| = k-2\) such that
   \(\sigma_1(L) = 0.\)
    \item There exists a subset \(L \subseteq [n]\) with \(|L| = k-2\) such that
   \(\eta \Delta_L^{(5)} = \Delta_L^{(2)} - \delta.\)
\end{enumerate}
\end{theorem}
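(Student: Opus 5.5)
Set $d^\perp=d(\mathcal{C}^\perp)$. The dual $\mathcal{C}^\perp$ has parameters $[n+3,n+3-k]$, so $\mathcal{C}^\perp$ is AMDS exactly when $d^\perp=k$. I will work on the side of $G$ and translate via the standard fact that $G$ is a parity-check matrix of $\mathcal{C}^\perp$. Hence $d^\perp$ is the smallest number of columns of $G$ that are linearly dependent.

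\textbf{Step 1: reduce to a statement about $k$ columns.} The claim becomes: $d^\perp=k$ if and only if (a) any $k-1$ columns of $G$ are linearly independent, and (b) some $k$ columns of $G$ are linearly dependent. Condition (b) is exactly the negation of the MDS criterion in \Cref{lem 2}(1). The case analysis in the proof of \Cref{Th 2} covers every type of $k$-subset of columns: $k$ from the first $n$; $k-1$ plus one extra; $k-2$ plus two extras; $k-3$ plus all three extras. It shows that the only possibly singular $k\times k$ submatrices are $B_1,\dots,B_5$. Their determinants are nonzero multiples of Vandermonde determinants, and they vanish precisely under the five equalities (1)--(5) of the present statement. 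So (b) holds if and only if one of conditions (1)--(5) holds.

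\textbf{Step 2: show (a) always holds.} I will prove that any $k-1$ columns of $G$ are independent, without extra hypotheses. Take $s$ columns among the first $n$ and $t\le 3$ of the appended columns, with $s+t=k-1$. Since $\operatorname{diag}(\bm v)$ is invertible, I may assume $\bm v=\bm 1$. The first $k-3$ rows have entries $\alpha^0,\dots,\alpha^{k-4}$ restricted to the chosen first-$n$ columns, and zeros in the appended columns. Row $k-2$ carries $\alpha^{k-3}$, together with the appended column $n+3$ entry $1$.

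If $t\ge 1$, then $s\le k-2$. The top $s$ rows, using powers $0,\dots,s-1\le k-3$, give a nonsingular Vandermonde block on the chosen $s$ columns. I then need the $t$ appended columns to supply $t$ further independent directions. The appended columns $n+1,n+2,n+3$ have their nonzero entries in rows $k$, $k-1$, and $\{k-2,k\}$ respectively. These are linearly independent and live in the last three rows.

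To make this rigorous, I will pick a square submatrix using rows $1,\dots,s$ together with rows among $k-2,k-1,k$ matched to the chosen appended columns. For $t=3$ this means $s=k-4$ with rows $1..k-4,k-2,k-1,k$. For $t\le 2$ the extras are matched by their pivot rows, and I choose $s$ rows for the Vandermonde part avoiding those pivot rows. Such a submatrix is block triangular with nonzero determinant. The care needed is that $s$ may exceed $k-3$ when $t=1$ or $2$, so the Vandermonde part may have to use a pivot row. Then I pick the rows $\{1,\dots,k\}$ minus one suitably chosen row and expand along the extra columns, as in the computations for $B_4,B_5$: a $V_{s}$ determinant times a $\pm1$ or $\begin{vmatrix}0&1\\1&\delta\end{vmatrix}$ factor.

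If $t=0$, then $s=k-1$, and the first $k-1$ rows have powers $0,\dots,k-2$ with no twist, since the twist appears only in row $k$. This is a nonzero Vandermonde determinant.

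\textbf{Step 3: conclude.} Combining Steps 1 and 2 gives $d^\perp=k$ if and only if one of (1)--(5) holds.

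\textbf{Expected obstacle.} The main difficulty is the bookkeeping in Step 2 when $t=1,2$ and $s=k-2$ or $k-3$. There one must choose the deleted row so that the remaining minor is a pure Vandermonde block times a unimodular block. My plan is to handle this by deleting row $k$ (the twisted row) whenever column $n+1$ is not chosen. When column $n+1$ is chosen, I use row $k$ for it and delete row $k-1$ or $k-2$ as appropriate.
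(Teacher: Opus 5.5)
Your reduction to (a) every $k-1$ columns of $G$ being independent and (b) some $k$ columns being dependent, with (b) read off from the minors in \Cref{Th 2}, is exactly the paper's route. The gap is Step 2. Claim (a) is false, and it fails in precisely the case you flagged: $t=1$ with the appended column $n+3$, which has entry $1$ in the $\alpha^{k-3}$ row and $\delta$ in the last row.

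Take $\bm v=\bm 1$ and $L\subseteq[n]$ with $|L|=k-2$. The $L$-columns are independent. Any combination of them that agrees with column $n+3$ on the top $k-2$ rows must use the coefficients $\prod_{l\in L\setminus\{j\}}(\alpha_j-\alpha_l)^{-1}$. By \Cref{lem 1} applied to $L$, rows $k-1$ and $k$ of that combination equal $-\sigma_1(L)$ and $\Delta^{(2)}_L+\eta\Delta^{(5)}_L$. So these $k-1$ columns are dependent if and only if $\sigma_1(L)=0$ and $\delta=\Delta^{(2)}_L+\eta\Delta^{(5)}_L$.

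Your proposed minor does not avoid this. Deleting row $k$ and expanding along column $n+3$ removes the $\alpha^{k-3}$ row and leaves exponents $0,\dots,k-4,k-2$. That minor is $\pm\sigma_1(L)|V_{k-2}(\bm{\alpha})|$, essentially $\det(B_4)$, not a pure Vandermonde. Deleting row $k-1$ gives essentially $\det(B_5)$. No choice of deleted row yields a unit multiple of $|V_{k-2}(\bm{\alpha})|$.

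This cannot be fixed by better bookkeeping, because the \emph{if} direction of the statement is false. Take $k=3$, some $\alpha_j=0$, and $\delta=0$ (for instance the data of the paper's Example 4.5 with $\delta=0$). Then column $j$ of $G$ is $v_j(1,0,0)^T$ and column $n+3$ is $(1,0,0)^T$. So $\mathcal{C}^\perp$ contains a word of weight $2<k$, even though condition (4) holds for $L=\{j\}$. The paper's own proof has the same hole: its list of $(k-1)$-column configurations never treats $k-2$ columns from the first $n$ plus a single appended column.

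Separately, Step 1 inherits a sign slip from the $B_5$ computation in \Cref{Th 2}. The determinant with exponents $0,\dots,k-4,k+2$ is $+\Delta^{(5)}_L|V_{k-2}(\bm{\alpha})|$, so $\det(B_5)=(\delta-\Delta^{(2)}_L-\eta\Delta^{(5)}_L)|V_{k-2}(\bm{\alpha})|$. For $k=3$ and $\bm v=\bm 1$ this is $\delta-\alpha_j^2-\eta\alpha_j^5$. Hence in odd characteristic condition (5) should read $\eta\Delta^{(5)}_L=\delta-\Delta^{(2)}_L$.

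With that correction, the true characterization is: $\mathcal{C}^\perp$ is AMDS if and only if one of (1)--(5) holds and no single $L$ with $|L|=k-2$ satisfies (4) and (5) simultaneously.
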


\begin{proof} [$\mathbf{Proof}$]
The code \(C^\perp\) is AMDS if and only if there exist \(k\) columns of \(G\) with rank at most \(k-1\), and any \(k-1\) columns of \(G\) are of full rank.

The first condition holds if and only if the following conditions hold:
\begin{enumerate}[(1)]
    \setlength{\itemindent}{2em}
    \item There exists a subset \(I \subseteq [n]\) with \(|I| = k\) such that
   \(\eta^{-1} = \Delta_I^{(3)}. \)
    \item There exists a subset \(J \subseteq [n]\) with \(|J| = k-1\) such that
   \(\sigma_1(J) = \eta \Delta_J^{(4)}.\)
    \item There exists a subset \(J \subseteq [n]\) with \(|J| = k-1\) such that
   \[\sigma_2(J)+ \delta = \eta \left( \Delta_J^{(5)} + \sigma_1(J) \Delta_J^{(4)} \right). \]
    \item There exists a subset \(L \subseteq [n]\) with \(|L| = k-2\) such that
   \(\sigma_1(L) = 0.\)
    \item There exists a subset \(L \subseteq [n]\) with \(|L| = k-2\) such that
   \(\eta \Delta_L^{(5)} = \Delta_L^{(2)} - \delta.\)
\end{enumerate}

Next, we consider the second condition that  
any \(k-1\) columns of \(G\) are of full rank if and only if every \(k \times (k-1)\) submatrix of \(G\) has rank exactly \(k-1\). 
Without loss of generality, assume \(v=1\). Then we discuss this by the following cases:

\begin{enumerate}[(1)]
\item Suppose we choose \(k-1\) columns from the first \(n+3\) columns of \(G\). It is easy to verify that the chosen matrix has full rank.

\item  Suppose we take \(k-3\) columns from the first \(n\) columns of \(G\), together with the \((n+1)\)-th and \((n+2)\)-th columns. Then the corresponding \(k \times (k-1)\) submatrix is
\[
S_3 = 
\begin{pmatrix}
1 & \cdots & 1 & 0 & 0 \\
\alpha_{i_1} & \cdots &\alpha_{i_{k-3}} & 0& 0\\
\vdots & & \vdots & \vdots & \vdots \\
\alpha_{i_1}^{k-2} & \cdots & \alpha_{i_{k-3}}^{k-2} & 0 & 1 \\
\alpha_{i_1}^{k-1} + \eta \alpha_{i_1}^{k+2} & \cdots & \alpha_{i_{k-3}}^{k-1} + \eta \alpha_{i_{k-3}}^{k+2} & 1 & 0
\end{pmatrix}.
\]
It is evident that the submatrix \(S_3\) has full rank.

\item Suppose we take \(k-3\) columns from the first \(n\) columns of \(G\), together with the \((n+1)\)-th and \((n+3)\)-th columns. Then the corresponding \(k \times (k-1)\) submatrix is
\[
S_4 = 
\begin{pmatrix}
1 & \cdots & 1 & 0 & 0 \\
\alpha_{i_1} & \cdots &\alpha_{i_{k-3}} & 0& 0\\
\vdots & & \vdots & \vdots & \vdots \\
\alpha_{i_1}^{k-3} & \cdots & \alpha_{i_{k-3}}^{k-3} & 0 & 1 \\
\alpha_{i_1}^{k-2} & \cdots & \alpha_{i_{k-3}}^{k-2} & 0 & 0 \\
\alpha_{i_1}^{k-1} + \eta \alpha_{i_1}^{k+2} & \cdots & \alpha_{i_{k-3}}^{k-1} + \eta \alpha_{i_{k-3}}^{k+2} & 1 & \delta
\end{pmatrix}.
\]
Then \(\operatorname{rank}(S_4) = k-1\) if and only if there exists a matrix \(S_4(q)\) such that \(\det(S_4(q)) \neq 0\), where \(1 \leq q \leq k\).
\par
Without loss of generality, assume \(q = k-1\). 
Note that the \(k-1\) rows of \(S_4(k-1)\) are linearly independent, 
so \(\operatorname{rank}(S_4) = k-1\).

\item  Suppose we take \(k-3\) columns from the first \(n\) columns of \(G\), together with the \((n+2)\)-th and \((n+3)\)-th columns. Then the corresponding \(k \times (k-1)\) submatrix is
\[
S_5 = 
\begin{pmatrix}
1 & \cdots & 1 & 0 & 0 \\
\alpha_{i_1} & \cdots &\alpha_{i_{k-3}} & 0& 0\\
\vdots & & \vdots & \vdots & \vdots \\
\alpha_{i_1}^{k-3} & \cdots & \alpha_{i_{k-3}}^{k-3} & 0 & 1 \\
\alpha_{i_1}^{k-2} & \cdots & \alpha_{i_{k-3}}^{k-2} & 1 & 0 \\
\alpha_{i_1}^{k-1} + \eta \alpha_{i_1}^{k+2} & \cdots & \alpha_{i_{k-3}}^{k-1} + \eta \alpha_{i_{k-3}}^{k+2} & 0 & \delta
\end{pmatrix}.
\]
Then \(\operatorname{rank}(S_5) = k-1\) if and only if there exists a matrix \(S_5(q)\) such that \(\det(S_5(q)) \neq 0\), where \(1 \leq q \leq k\).
Without loss of generality, assume \(q = k\). 
Note that the first \(k-1\) rows of \(S_5(q)\) are linearly independent, 
so \(\operatorname{rank}(S_5) = k-1\).

\item Suppose we take \(k-4\) columns from the first \(n\) columns of \(G\), together with the \((n+1)\)-th, \((n+2)\)-th and \((n+3)\)-th columns. Then the corresponding \(k \times (k-1)\) submatrix is
\[
S_6 = 
\begin{pmatrix}
1 & \cdots & 1 & 0 & 0 & 0 \\
\alpha_{i_1} & \cdots &\alpha_{i_{k-4}} & 0& 0& 0\\
\vdots & & \vdots & \vdots & \vdots & \vdots \\
\alpha_{i_1}^{k-3} & \cdots & \alpha_{i_{k-4}}^{k-3} & 0 & 0 & 1 \\
\alpha_{i_1}^{k-2} & \cdots & \alpha_{i_{k-4}}^{k-2} & 0 & 1 & 0 \\
\alpha_{i_1}^{k-1} + \eta \alpha_{i_1}^{k+2} & \cdots & \alpha_{i_{k-4}}^{k-1} + \eta \alpha_{i_{k-4}}^{k+2} & 1 & 0 & \delta
\end{pmatrix}.
\]It is clear that the rank of \(S_6\) is \(k-1\).
\end{enumerate}
This completes the proof.
\end{proof}

\textbf{Example 4.5}
Let $\mathbb{F}_{11}$ be the finite field with $11$ elements. Let $n = 5$, $k = 3$, $\alpha = (0, 4, 5, 8, 9)$.
For any $\eta \in \mathbb{F}_{11}^*$ and arbitrary $\delta \in \mathbb{F}_{11}$, the dual code of $\mathcal{C}$ is an AMDS code with parameters $[8, 4, 4]$.
All these results are obtained by utilizing Magma.

\begin{theorem}
The code \(C\) is NMDS if and only if \(C\) is AMDS.
\end{theorem}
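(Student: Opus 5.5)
The plan is to work through the column criteria of \Cref{lem 2}, comparing the NMDS characterization (item (3)) with the AMDS characterization (item (2)). By definition, $\mathcal{C}$ is NMDS exactly when both $\mathcal{C}$ and $\mathcal{C}^\perp$ are AMDS. The implication ``NMDS $\Rightarrow$ AMDS'' is therefore immediate, and the work lies in the converse.

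For the converse, assume $\mathcal{C}$ is AMDS. By \Cref{Th 3}, all of conditions (1)--(5) of that theorem hold. So does condition (6), which says that at least one of the five degeneracy conditions 1)--5) is satisfied. These five conditions are word for word the five alternatives (1)--(5) of \Cref{Th 4}. Hence one alternative of \Cref{Th 4} holds, and \Cref{Th 4} gives that $\mathcal{C}^\perp$ is AMDS. Combined with the assumption that $\mathcal{C}$ is AMDS, the definition (or equivalently \Cref{lem 2}(3)) shows that $\mathcal{C}$ is NMDS.

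To make the argument self-contained and not merely a matching of labels, I will also phrase it directly through \Cref{lem 2}(3), which requires three things:
\begin{enumerate}[(i)]
\item any $k+1$ columns of $G$ have full rank;
\item some $k$ columns of $G$ are linearly dependent;
\item any $k-1$ columns of $G$ are linearly independent.
\end{enumerate}
The AMDS hypothesis supplies (i) and (ii) through \Cref{lem 2}(2). For (iii), I reuse the case analysis in the proof of \Cref{Th 4}. The sets of $k-1$ columns taken entirely from the first $n$ columns give Vandermonde-type submatrices whose top $k-1$ rows are an invertible Vandermonde block. Sets that also use columns $n+1$, $n+2$, $n+3$ have an explicit nonzero $(k-1)\times(k-1)$ minor; for these, the proof of \Cref{Th 4} already exhibits a row to delete, namely the analysis of $S_3,\dots,S_6$.

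The main obstacle I expect is the rigor of step (iii), since \Cref{Th 4} dismissed several of those cases as ``evident''. The mixed cases can be settled by choosing the deleted row $q$ explicitly:
\begin{itemize}
\item For $S_4$ and $S_5$, after removing a suitable row, cofactor expansion along the extra unit columns reduces the minor, up to sign, to a $(k-3)\times(k-3)$ Vandermonde determinant in rows $1,\dots,k-3$. That determinant is nonzero because the $\alpha_i$ are distinct.
\item For $S_3$ and $S_6$, the same expansion isolates the identity-like block in the last rows.
\item For $k-1$ columns from the first $n$ alone, delete the last row to obtain $V_{k-1}$.
\end{itemize}
Since the $v_i$ are nonzero, scaling by $\operatorname{diag}(\bm v)$ does not affect any of these ranks, which completes the plan.
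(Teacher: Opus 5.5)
Your reduction is the paper's own. Once $\mathcal{C}$ is AMDS it already has $k$ dependent columns, so $\mathcal{C}$ is NMDS iff every $k-1$ columns of $G$ are independent, and both you and the paper take this from \Cref{Th 4}. That step is where the argument breaks. Neither the proof of \Cref{Th 4} (whose case (1) merely asserts it) nor your case list ($k-1$ columns from the first $n$, and $S_3,\dots,S_6$) treats $k-2$ columns indexed by some $L\subseteq[n]$ together with exactly \emph{one} appended column. For column $n+3$ this set can be dependent.

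Consider a relation $\sum_{i\in L}a_i\mathbf{g}_i+b\,\mathbf{g}_{n+3}=0$. The rows of degree $0,\dots,k-4$ force $a_iv_i=t\prod_{j\in L\setminus\{i\}}(\alpha_i-\alpha_j)^{-1}$. By \Cref{lem 1} applied to $L$, the last three rows then become $t+b=0$, $t\,h_1=0$ and $t\,(h_2+\eta h_5-\delta)=0$. Here $h_r$ is the complete symmetric polynomial of degree $r$ in $\{\alpha_i\}_{i\in L}$. So these $k-1$ columns are dependent exactly when $\sigma_1(L)=0$ and $\delta=h_2+\eta h_5$, i.e.\ when $\det B_4=\det B_5=0$ for the same $L$. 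For $k=3$ this just says some $\alpha_l=0$ and $\delta=0$, so column $l$ is a multiple of column $n+3=(1,0,0)^T$.

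This is not a presentational gap: the statement itself is false. Take $q=5$, $n=k=3$, $\bm{\alpha}=(0,1,2)$, $\bm{v}=(1,1,1)$, $\eta=1$, $\delta=0$. The columns of $G$ are $(1,0,0)^T,(1,1,2)^T,(1,2,1)^T,(0,0,1)^T,(0,1,0)^T,(1,0,0)^T$. Any three of the five distinct vectors are independent (all ten $3\times 3$ determinants are nonzero mod $5$), so every $4$ columns have rank $3$. Columns $1$ and $6$ plus any third column have rank $2$. Hence $\mathcal{C}$ is a $[6,3,3]$ AMDS code.

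However, $(1,0,0,0,0,-1)\in\mathcal{C}^\perp$, so $d(\mathcal{C}^\perp)=2\neq 3$ and $\mathcal{C}$ is not NMDS. This happens even though alternative (4) of \Cref{Th 4} holds with $L=\{1\}$, so \Cref{Th 4} is false here. The paper's Example 4.5 with $\delta=0$ fails for the same reason. Consequently no argument along your lines can prove the stated equivalence.

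What your approach does prove is a corrected statement: $\mathcal{C}$ is NMDS iff $\mathcal{C}$ is AMDS and no $L\subseteq[n]$ with $|L|=k-2$ satisfies $\sigma_1(L)=0$ and $\delta=h_2+\eta h_5$. For this you must also check that every other type of $k-1$ columns is always independent. That includes $L\cup\{n+1\}$ and $L\cup\{n+2\}$, which your list also omits.
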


\begin{proof} [$\mathbf{Proof}$]
Combining \Cref{Th 3,Th 4}, it is straightforward to obtain the necessary and sufficient condition for \(C\) to be NMDS.
\end{proof}

\section{Non-GRS type}\label{non}
Throughout this paper, a code is called \textit{non-GRS type} if it is not monomially equivalent to a GRS code.

\begin{theorem}\label{Th 6}
For \(3 \leq k < \dfrac{n+3}{2}\), the code \(C_1\) is of non-GRS type.
\end{theorem}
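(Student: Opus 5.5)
We argue via the Schur square, using \Cref{lem 5} and \Cref{lem 6}. Suppose, for contradiction, that $\mathcal{C}_1$ (an $[n+2,k]$ code) is monomially equivalent to a GRS code. Schur squares are preserved under monomial equivalence up to the squared scaling, so $\dim(\mathcal{C}_1^2)$ would equal the dimension of the square of an $[n+2,k]$ GRS code. The hypothesis $3\le k<\frac{n+3}{2}$ is exactly $2\le k\le \frac{(n+2)+1}{2}$. Hence \Cref{lem 6} would force $\dim(\mathcal{C}_1^2)=2k-1$. We will show instead that $\dim(\mathcal{C}_1^2)\ge 2k$.

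For the lower bound, take the rows of $G_1$, without the diagonal factor $\operatorname{diag}(\bm{v_1})$, since scaling coordinates by the nonzero $v_i$ does not change the dimension of the square. These rows are
\[
\bm{g}_i=(\bm{\alpha}^i,0,0)\ (0\le i\le k-3),\qquad \bm{g}_{k-2}=(\bm{\alpha}^{k-2},0,1),\qquad \bm{g}_{k-1}=(\bm{\alpha}^{k-1}+\eta\bm{\alpha}^{k+2},1,0).
\]
By \Cref{lem 5}, $\mathcal{C}_1^2$ is spanned by the products $\bm{g}_i\star\bm{g}_j$, and we list them by type.
\begin{itemize}
\item For $i,j\le k-3$, the products $(\bm{\alpha}^{i+j},0,0)$ give all exponents $0,\dots,2k-6$.
\item For $i\le k-3$, the products $\bm{g}_i\star\bm{g}_{k-2}=(\bm{\alpha}^{i+k-2},0,0)$ give exponents up to $2k-5$, and $\bm{g}_{k-2}^{\star 2}=(\bm{\alpha}^{2k-4},0,1)$.
\item For $i\le k-3$, the products $\bm{g}_i\star\bm{g}_{k-1}=(\bm{\alpha}^{i+k-1}+\eta\bm{\alpha}^{i+k+2},0,0)$.
\item We also have $\bm{g}_{k-2}\star\bm{g}_{k-1}=(\bm{\alpha}^{2k-3}+\eta\bm{\alpha}^{2k},0,0)$ and $\bm{g}_{k-1}^{\star 2}=(\,(\bm{\alpha}^{k-1}+\eta\bm{\alpha}^{k+2})^{2},1,0)$.
\end{itemize}

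The main step is to exhibit $2k$ linearly independent vectors among these. We select:
\begin{itemize}
\item the vectors $(\bm{\alpha}^{s},0,0)$ for $0\le s\le 2k-5$, which is $2k-4$ vectors;
\item $\bm{g}_{k-2}^{\star2}$ and $\bm{g}_{k-1}^{\star2}$, which are the only vectors with a nonzero last, respectively second-to-last, coordinate;
\item the two twisted vectors $\bm{g}_{k-4}\star\bm{g}_{k-1}$ (requiring $k\ge 4$) or $\bm{g}_{k-3}\star\bm{g}_{k-1}$, together with $\bm{g}_{k-2}\star\bm{g}_{k-1}$, whose leading terms are $\eta\bm{\alpha}^{2k-1}$ and $\eta\bm{\alpha}^{2k}$.
\end{itemize}
Independence then reduces to the independence of the first $n$ coordinates of the vectors supported there. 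Modulo the span of $\bm{\alpha}^0,\dots,\bm{\alpha}^{2k-5}$, these are polynomial evaluations at $n$ distinct points of degree at most $2k$. Since $2k\le n+1$, such vectors are independent whenever their degrees are at most $n-1$. Because $\eta\neq0$, the distinct leading exponents $2k-1$ and $2k$ give triangular independence in the monomial basis.

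This is where we expect the main obstacle. At the top of the range, $2k=n+2$ or $n+1$, the monomials $\bm{\alpha}^{2k-1}$ and $\bm{\alpha}^{2k}$ exceed degree $n-1$ and can collapse through the relation $\prod(x-\alpha_i)=0$ on the evaluation set. To handle this, we will instead use the lower-degree terms of the twisted products, for example $\bm{\alpha}^{2k-4}$ coming from $\bm{g}_{k-3}\star\bm{g}_{k-1}$ versus $\bm{g}_{k-2}^{\star2}$, to count directly. Alternatively, we can argue that $\dim(\mathcal{C}_1^2)=2k-1$ would force the twisted vectors to lie in $\langle\bm{\alpha}^0,\dots,\bm{\alpha}^{2k-3}\rangle$, and contradict this by a Vandermonde determinant using $\eta\ne0$.

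Once $\dim(\mathcal{C}_1^2)\ge 2k>2k-1$ is established, \Cref{lem 6} yields the contradiction, so $\mathcal{C}_1$ is of non-GRS type. As a corollary, \Cref{Th 1} together with \Cref{lem 4} will then show that $\mathcal{C}$ itself is of non-GRS type.
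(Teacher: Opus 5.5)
\textbf{There is a genuine gap, and for small $k$ it cannot be closed, because the theorem is false there.}

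Your strategy is the paper's: bound $\dim(\mathcal{C}_1^2)$ below by $2k$, then apply \Cref{lem 6} to the length-$(n+2)$ code. Your list of generators and products is correct; the paper in fact squares the length-$(n+3)$ code with $\delta$.

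The gap is the independence step. Distinct leading degrees guarantee independent evaluation vectors on $n$ points only when all degrees are at most $n-1$. Your top degree is $2k$, so you have proved the claim only for $2k\le n-1$. The hypothesis $k<\frac{n+3}{2}$ gives $2k\le n+2$, not $2k\le n+1$. So the three cases $2k\in\{n,n+1,n+2\}$ remain open (you list only two), and neither remedy you sketch is an actual argument.

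For $k\ge 6$ the hole can be closed:
\begin{itemize}
\item For $i\le k-4$ one has $i+k-1\le 2k-5$, so $\bm{g}_i\star\bm{g}_{k-1}$ reduces modulo $\langle\bm{\alpha}^0,\dots,\bm{\alpha}^{2k-5}\rangle$ to the pure monomial $\eta\bm{\alpha}^{i+k+2}$.
\item Since $k+2\le 2k-4$, this puts $(\bm{\alpha}^s,0,0)$ in $\mathcal{C}_1^2$ for every $0\le s\le 2k-2$.
\item A Vandermonde minor on $2k-2\le n$ coordinates, together with your two tail vectors, then gives $\dim(\mathcal{C}_1^2)\ge 2k$ with no degree restriction.
\end{itemize}
This pure-monomial Vandermonde is what the paper's minor $N$ uses. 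However, the paper's claim that all these monomials lie in the square is justified only for $k\ge 6$.

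For small $k$ at the top of the range no repair exists. The dimension can genuinely be $2k-1$, and the statement itself fails. Take $q=11$, $k=3$, $n=4$, $\bm{\alpha}=(1,2,3,5)$, $\eta=10$, $\bm{v}=(1,1,1,1)$. Then
\[
G_1=\begin{pmatrix}1&1&1&1&0&0\\ 1&2&3&5&0&1\\ 0&5&8&2&1&0\end{pmatrix}.
\]
\begin{itemize}
\item All six columns lie on the conic $2X^2+9XY+9XZ+10YZ=0$.
\item Its symmetric matrix $\begin{pmatrix}2&10&10\\10&0&5\\10&5&0\end{pmatrix}$ has determinant $4\neq 0$ in $\mathbb{F}_{11}$, so the conic is nondegenerate.
\item Six points of a nondegenerate conic form an arc, so $\mathcal{C}_1$ is a $[6,3,4]$ MDS code. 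One can also check directly that all twenty $3\times 3$ minors are nonzero.
\item The conic has $12$ points, so it can be parametrized with none of the six points at infinity. This exhibits $\mathcal{C}_1$ as a GRS code.
\item Consistently, $2\bm{\alpha}^0+9\bm{\alpha}+9(\bm{\alpha}^2+\eta\bm{\alpha}^5)+10(\bm{\alpha}^3+\eta\bm{\alpha}^6)=\bm{0}$, so $\dim(\mathcal{C}_1^2)=5=2k-1$.
\end{itemize}

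So the obstacle you flagged is a real failure of the theorem, not a technicality. What this method proves uniformly is the range $k\ge 6$ with $2k\le n+2$, together with $2k\le n-1$ for $3\le k\le 5$.
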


\begin{proof}[$\mathbf{Proof}$]
Without loss of generality, assume \(v=1\). For \(3 \leq k < \dfrac{n+3}{2}\), 
by \Cref{lem 6}, we only need to prove that if  the dimension of the Schur product \(C_1^2\) is \(2k-1\).

By definition of \(C_1\), for \(0 \leq i \leq k-4,\)
\[
C_1 = \left\langle
(\alpha^i, 0, 0, 0),
(\alpha^{k-3}, 0, 0, 1),
(\alpha^{k-2}, 0, 1, 0),
(\alpha^{k-1} + \eta \alpha^{k+2}, 1, 0, \delta)
\right\rangle.
\quad 
\]

By \Cref{lem 5}, for \( 0 \leq i,j \leq k-4,\)
\[
\begin{aligned}
C_1^{2} = \Big\langle
& (\alpha^{i+j}, 0, 0, 0), 
(\alpha^{i+k-3}, 0, 0, 0),
(\alpha^{i+k-2}, 0, 0, 0), 
\alpha^i (\alpha^{k-1} + \eta\alpha^{k+2}, 0, 0, 0), 
(\alpha^{k-3}, 0, 0, 1)^{2},\\
&(\alpha^{k-3}, 0, 0, 1) \star (\alpha^{k-2}, 0, 1, 0), 
(\alpha^{k-3}, 0, 0, 1) \star (\alpha^{k-1} + \eta\alpha^{k+2}, 1, 0, \delta),
(\alpha^{k-2}, 0, 1, 0)^{2}, \\ 
&(\alpha^{k-2}, 0, 1, 0) \star (\alpha^{k-1} + \eta\alpha^{k+2}, 1, 0, \delta),
(\alpha^{k-1} + \eta\alpha^{k+2}, 1, 0, \delta)^{2}
\Big\rangle.
\end{aligned}
\]

For \(k \ge 3\), we have \(2k-4 \ge k-1\) and \(2k-3 \ge k\).
\[
\begin{aligned}
C_1^{2} &= 
\Big\langle
(\alpha^s, 0, 0, 0),
(\eta\alpha^{2k-1}, 0, 0, \delta),
(\eta\alpha^{2k}, 0, 0, 0),
(\eta^2\alpha^{2k+4} + 2\eta\alpha^{2k+1}, 1, 0, \delta^2),
0 \le s \le 2k-2
\Big\rangle \\
&\quad
+ \Big\langle (0, \dots, 0, 0, 0, 1)\Big\rangle
+ \Big\langle (0, \dots, 0, 0, 1, 0)\Big\rangle \\
&= 
\Big\langle
(\alpha^s, 0, 0, 0), 
(\eta^2\alpha^{2k+4} + 2\eta\alpha^{2k+1}, 1, 0, 0), 
0 \le s \le 2k
\Big\rangle 
+ \Big\langle (0, \dots, 0, 0, 0, 1)\Big\rangle\\
&\quad+ \Big\langle (0, \dots, 0, 0, 1, 0)\Big\rangle. \\
\end{aligned}
\]

For \(3 \le k < \dfrac{n+3}{2}\), we obtain the \((2k+3) \times (2k+3)\) matrix
\[
\left(
\begin{array}{ccccccc}
1 & \cdots & 1 & 0 & 0 & 0 \\
\alpha_1 & \cdots & \alpha_{n} & 0 & 0 & 0 \\
\vdots & & \vdots & \vdots & \vdots & \vdots \\
\alpha_1^{2k} & \cdots & \alpha_n^{2k} & 0 & 0 & 0 \\
\eta^2\alpha_1^{2k+4}+2\eta\alpha_1^{2k+1} & \cdots & \eta^2\alpha_n^{2k+4}+2\eta\alpha_n^{2k+1} & 1 & 0 & 0 \\
0 & \cdots & 0 & 0 & 1 & 0 \\
0 & \cdots & 0 & 0 & 0 & 1
\end{array}
\right)
\]
has a \(2k \times 2k\) invertible submatrix
\begin{align*}
N &= 
\begin{pmatrix}
1 & \cdots & 1 & 0 & 0 & 0 \\
\alpha_1 & \cdots & \alpha_{2k-3} & 0 & 0 & 0 \\
\vdots & & \vdots & \vdots & \vdots & \vdots \\
\alpha_1^{2k-4} & \cdots & \alpha_{2k-3}^{2k-4} & 0 & 0 & 0 \\
\eta^2\alpha_1^{2k+4}+2\eta\alpha_1^{2k+1} & \cdots & \eta^2\alpha_{2k-3}^{2k+4}+2\eta\alpha_{2k-3}^{2k+1} & 1 & 0 & 0 \\
0 & \cdots & 0 & 0 & 1 & 0 \\
0 & \cdots & 0 & 0 & 0 & 1
\end{pmatrix},\\
|N| &=
\begin{vmatrix}
1 & \cdots & 1 \\
\alpha_1 & \cdots & \alpha_{2k-3}\\
\vdots & & \vdots \\
\alpha_1^{2k-4} & \cdots & \alpha_{2k-3}^{2k-4}
\end{vmatrix}
= \prod_{1 \le i < j \le 2k-3} (\alpha_j - \alpha_i) \neq 0.
\end{align*}
We have \(\dim(C_1^{ 2}) \ge 2k > 2k-1\). Thus \(C_1\) is a non-GRS type code for \(3 \le k < \dfrac{n+3}{2}\).

This completes the proof.
\end{proof}

\begin{theorem}\label{Th 7}
For \(3 \le k \le n-4\), the code \(C\) is of non-GRS type.
\end{theorem}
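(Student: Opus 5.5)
We plan to prove \Cref{Th 7} by reducing it to a statement about $\mathcal{C}_1$ and applying the Schur-product criterion to the dual code. By \Cref{Th 1}, $\mathcal{C}=\overline{\mathcal{C}_1}(\bm{t})$. By \Cref{lem 4}, if $\mathcal{C}$ were monomially equivalent to a GRS code, then $\mathcal{C}_1$ would be as well. So it suffices to show that $\mathcal{C}_1$ is of non-GRS type.

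For $k<\frac{n+3}{2}$ this is exactly \Cref{Th 6}, so only the range $\frac{n+3}{2}\le k\le n-4$ remains. There the Schur square of $\mathcal{C}_1$ fills the whole space and carries no information. We therefore pass to $\mathcal{C}_1^\perp$, an $[n+2,\,n+2-k]$ code. The dual of a GRS code is again GRS, and Schur squares behave well under monomial equivalence up to scaling. Hence if $\mathcal{C}_1$ were GRS, then $\mathcal{C}_1^\perp$ would be GRS of dimension $k'=n+2-k$. The hypotheses give $k'\ge 6$, and $k\le n-4$ ensures $k'$ is not too small. Also $k\ge\frac{n+3}{2}$ gives $2k'-1\le n+1<n+2$. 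By \Cref{lem 6}, a GRS code of this dimension would satisfy $\dim\big((\mathcal{C}_1^\perp)^2\big)=2k'-1=2n-2k+3$. We will show that this dimension is strictly larger.

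First we write down an explicit parity-check matrix for $\mathcal{C}_1$. Using \Cref{lem 1}, as in the proof of \Cref{Th 1}, take vectors of the form
\[
\big(u_1 v_1^{-1}g(\alpha_1),\dots,u_n v_n^{-1}g(\alpha_n),\,c_1(g),\,c_2(g)\big).
\]
Here $g$ ranges over a space of polynomials of degree at most $n-k+1$. The constraints $\sum u_i\alpha_i^{h}g(\alpha_i)=0$ are automatic for $h\le n-2$, and the twisted row and the two extra columns are handled by the tail values $c_1,c_2$. We expect the generators to be $g=x^j$ for $0\le j\le n-k-3$, each with tail $(0,0)$. The remaining generators are a few polynomials involving $x^{n-k-2},x^{n-k-1},x^{n-k}$ and possibly $x^{n-k+1}$, with tail coordinates expressed through the $S_r$ and $\eta$. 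Taking $\bm{v}=\bm 1$ without loss of generality, the Schur products then give the following. The products $x^{i+j}$ span all monomials up to degree $2(n-k-3)$. The products involving the special generators push the degree up to about $2n-2k+2$ or beyond, and the tail coordinates contribute extra independent directions.

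Next we exhibit $2k'=2n-2k+4$ independent vectors in $(\mathcal{C}_1^\perp)^2$, using a Vandermonde-type minor as in the proof of \Cref{Th 6}. The vectors will be monomials $\bm{\alpha}^s$ with zero tail, together with one or two vectors having nonzero tail and a twisted top-degree term. The bound $k\le n-4$ provides at least $2n-2k+1$ distinct $\alpha_i$ for the Vandermonde block.

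The main obstacle is the explicit description of $\mathcal{C}_1^\perp$ and the check that its twisted generators give genuinely new directions. The concern is that some combination might collapse back into the span of monomials of degree at most $2k'-2$. We would first try to isolate one generator whose restriction to the first $n$ coordinates has degree at least $n-k+3$, due to the $\eta x^{k+2}$ term, and show that its square has leading degree exceeding $2k'-2$ with an $\eta^2$ coefficient that cannot cancel. If that fails, we would instead use the tail coordinates, where a square with nonzero last-two entries is independent of all zero-tail monomials.
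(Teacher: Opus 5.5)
There is a genuine gap. The step that actually forces $\dim\big((\mathcal{C}_1^\perp)^2\big)>2k'-1$ is never carried out, and the mechanism you list first cannot work. If you carry out your own dual computation (with $\bm v=\bm 1$), $\mathcal{C}_1^\perp$ is spanned by
\[
w_j=(u_1\alpha_1^{j},\dots,u_n\alpha_n^{j},a_j,b_j),\qquad 0\le j\le n-k+1=k'-1,
\]
where $a_j=0$ for $j\le n-k-4$, $a_{n-k-3}=-\eta$, $a_{n-k-2}=-\eta S_1$, and $b_j=0$ except $b_{n-k+1}=-1$. So the twist $\eta x^{k+2}$ never produces a high-degree term on the first $n$ coordinates of the dual; it only enters the tail. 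Your guess that $x^{n-k-3}$ has tail $(0,0)$ is also wrong: $\sum_i u_i\alpha_i^{n-1}=1$ gives tail $(-\eta,0)$, and this is exactly the generator that matters. Every product $w_i\star w_j$ therefore has first part $(u_1^2\alpha_1^{i+j},\dots,u_n^2\alpha_n^{i+j})$ with $i+j\le 2k'-2$. Hence no square has ``leading degree exceeding $2k'-2$'', and the projection of $(\mathcal{C}_1^\perp)^2$ onto the first $n$ coordinates has dimension at most $2k'-1$. The dimension can exceed $2k'-1$ only if $(\mathcal{C}_1^\perp)^2$ contains a nonzero vector supported on the last two coordinates. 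Your fallback (a square with nonzero tail being independent of zero-tail monomials) does not give this. You need two products with equal first parts and different tails.

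The missing identity is
\[
w_{n-k-3}\star w_{n-k-3}-w_{n-k-4}\star w_{n-k-2}=(0,\dots,0,\eta^2,0).
\]
This is where $k\le n-4$ is really used: it guarantees that $w_{n-k-4}$ exists and has zero tail. The Vandermonde count $n\ge 2k'-1$ comes from $k\ge\frac{n+3}{2}$, not from $k\le n-4$ as you wrote. Together with the products whose first parts are $u^2\alpha^s$, $0\le s\le 2k'-2$, this vector gives $\dim\ge 2k'$, contradicting \Cref{lem 6}.

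These three vectors are the paper's $\bm{c_1},\bm{c_2},\bm{c_3}$, padded by a zero, since $\mathcal{C}_1^\perp\times\{0\}\subseteq\mathcal{C}^\perp$ by \Cref{lem 3}. The paper works directly in $\mathcal{C}^\perp$ and compares $d\big((\mathcal{C}^\perp)^2\big)=1$ with the GRS value $2k-n-1\ge 3$, rather than reducing to $\mathcal{C}_1$ and counting dimensions. Once this identity is supplied, your route is sound. Its split at $k=\frac{n+3}{2}$ also meshes exactly with \Cref{Th 6}. By contrast, the paper's stated cases ($k<\frac{n+4}{2}$ via \Cref{Th 6}, $k\ge\frac{n+4}{2}$ via the dual) formally leave $k=\frac{n+3}{2}$ with $n$ odd uncovered.
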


\begin{proof}[$\mathbf{Proof}$]
Without loss of generality, assume \(v=1\). We proceed by two cases.

- $\mathbf{Case\ 1:}$ \(3 \le k < \dfrac{n+4}{2}\). 
From \Cref{Th 1}, we know that \(\mathcal{C} = \overline{\mathcal{C}_1}(\bm{t})\).
Therefore, by \Cref{Th 6}, it follows that \(C\) is a non-GRS type code for \(3 \le k < \dfrac{n+4}{2}\).

- $\mathbf{Case\ 2:}$\(\dfrac{n+4}{2} \le k \le n-4\).
 By \Cref{lem 6}, we only need to prove \(C\) is a non-GRS type code when the minimum distance of the Schur product \((C^\perp)^2\) is \(2k-n+1\).

We have \(G \bm{c_i}^T = 0\) for \(i=1,2,3\), where
\[
\begin{aligned}
\bm{c_1} &= \left( u_1 \alpha_1^{n-k-4}, u_2 \alpha_2^{n-k-4}, \dots, u_n \alpha_n^{n-k-4},0,0,0 \right), \\
\bm{c_2} &= \left( u_1 \alpha_1^{n-k-3}, u_2 \alpha_2^{n-k-3}, \dots, u_n \alpha_n^{n-k-3},-\eta,0,0 \right), \\
\bm{c_3} &= ( u_1 \alpha_1^{n-k-2}, u_2 \alpha_2^{n-k-2}, \dots, u_n \alpha_n^{n-k-2},-\eta \sum_{i=1}^n \alpha_i,0, 0 ).
\end{aligned}
\]
Thus  \(\bm{c_1},\bm{ c_2},\bm{ c_3} \in \bm{C^\perp}\) and
\(
\bm{c} =\bm{ c_1} \bm{c_3 }- \bm{c_2}^2 =(0, \dots, 0, -\eta^2, 0, 0 ) \in (\bm{C^\perp})^{2}.
\) So \(d\left((\bm{C}^\perp)^{2}\right) = 1\).

However, for an \([n+3, k]\) GRS code, we have 
\(
d\left((\text{GRS}^\perp)^{ 2}\right) = 2k - n - 1.
\)
When \(\dfrac{n+4}{2} \le k \le n-4\), we have
\(
d\left((\bm{C}^\perp)^{ 2}\right) \ge 3 \neq 1.
\)
Thus \(\bm{C}\) is of non-GRS type for \(\dfrac{n+4}{2} \le k \le n-4\).

Combining both cases, \(\bm{C}\) is a non-GRS type code for \(3 \le k \le n-4\).
\end{proof}

\section{Conclusion}\label{Conclusion}

In this paper, we mainly investigate some properties of  $\mathcal{C}$ codes. 
In \Cref{Th 1}, we prove  that $\mathcal{C}$ is the extended code of $\mathcal{C}_1$.
The necessary and sufficient conditions for $\mathcal{C}$ to be MDS and AMDS are presented in \Cref{Th 2,Th 3}, respectively. 
We also characterize the AMDS property of their dual codes in \Cref{Th 4} and show that  $\mathcal{C}$ is NMDS. 
Furthermore, we verify in \Cref{Th 7} that the extended code $\mathcal{C}$ is of  the non-GRS property.
In future work, we will investigate the covering radii and deep holes of these codes.

\section{Acknowledgment and Declarations}
\begin{itemize}
  \item {\bfseries Funding:}
  The paper is supported by the Fundamental Research Funds for the Central Universities (26CX03010A), China Education Innovation Research Fund (2022BL027), Shandong Provincial Natural Science Foundation of China (ZR2023LLZ013) and the Key Project of Computing Power Internet and Information Security, Ministry of Education (2024ZD006).
  \item {\bfseries Conflict of interest:} 
  The authors declare no conflict of interest.
\end{itemize}

\bibliographystyle{plain}
\bibliography{ref.bib}

\end{document}